\newtheorem{theorem}{Theorem}[section]
\newtheorem{lemma}[theorem]{Lemma}
\newtheorem{proposition}[theorem]{Proposition}
\newtheorem{corollary}[theorem]{Corollary}
\newtheorem{definition}[theorem]{Definition}
\theoremstyle{remark}
\newtheorem{remark}[theorem]{Remark}
\newtheorem{remarks}[theorem]{Remarks}
\def\R{{\mathbb R}}
\def\N{{\mathbb N}}
\def\M{{\mathbb M}}
\def\diag{\mathop{\textnormal{diag}}}
\def\cof{\mathop{\textnormal{cof}}}
\def\tr{\mathop{\textnormal{tr}}}
\def\div{\mathop{\textnormal{div}}\nolimits}
\def\SO{\mathrm{SO}}
\def\SL{\mathop{\mathrm{SL}}}
\def\Sl{\mathrm{sl}}
\def\Sym{\mathop{\mathrm{Sym}}}
\def\Skew{\mathop{\mathrm{Skew}}}
\def\bigpenta{\begin{tikzpicture}[scale=0.15,line cap=round,line join=miter,line width=.15mm]
\draw  (2.,0.)-- (3.,0.);
\draw (3.,0.)-- (3.3090169943749475,0.9510565162951532);
\draw  (3.3090169943749475,0.9510565162951532)-- (2.5,1.5388417685876266);
\draw  (2.5,1.5388417685876266)-- (1.6909830056250525,0.9510565162951536);
\draw  (1.6909830056250525,0.9510565162951536)-- (2.,0.);
\end{tikzpicture}}
\def\pentagon{\begin{tikzpicture}[scale=0.11,line cap=round,line join=miter,line width=.08mm]
\draw  (2.,0.)-- (3.,0.);
\draw (3.,0.)-- (3.3090169943749475,0.9510565162951532);
\draw  (3.3090169943749475,0.9510565162951532)-- (2.5,1.5388417685876266);
\draw  (2.5,1.5388417685876266)-- (1.6909830056250525,0.9510565162951536);
\draw  (1.6909830056250525,0.9510565162951536)-- (2.,0.);
\end{tikzpicture}}
\def\penta{\hbox to 0pt{\hss\pentagon\hss}}
\title{A Lagrangian formulation for the Oldroyd B fluid and the second principle of thermodynamics\footnote{2020 MSC numbers: Primary: 76A05; Secondary: 80A17, 76A10.\\
\indent Keywords: Second law of thermodynamics; visco-elastic materials with internal variables; objective derivatives; Oldroyd B fluids.}}
\author{Herv\'e Le Dret$^1$ and Annie Raoult$^2$}{}
\date{\today, \DTMcurrenttime}
\begin{document}
\maketitle

\medskip

{\footnotesize
 \centerline{$^1$Sorbonne Universit\'e, Universit\'e Paris Cit\'e, CNRS, Laboratoire Jacques-Louis Lions, F-75005 Paris, France}
} 

\medskip

{\footnotesize
 \centerline{$^2$Universit\'e Paris Cit\'e, CNRS, MAP5, F-75006 Paris, France}
}

\bigskip

\begin{abstract}
We show that the Oldroyd B fluid model is the Eulerian form of a Lagrangian model with an internal variable that satisfies the second law of thermodynamics under some conditions on the initial value of the internal variable. We similarly derive several new nonlinear versions of the Oldroyd B model as well as Lagrangian formulations of the Zaremba-Jaumann and Oldroyd A fluid models. We discuss whether or not these other models satisfy the second law.
\end{abstract}

\section{Introduction}

The purpose of this article is to introduce a physically motivated thermodynamic framework for the Oldroyd B complex fluid model. In this framework, we show that the second law of thermodynamics is satisfied by this model, under some conditions. 

Historically, Oldroyd introduced cases (A) and (B) as  examples of his general theory, generalizing a fluid model of Fröhlich and Sack \cite{Frohlich}, without any concern for thermodynamics, see \cite{Oldroyd}. His main concern was the invariance of the equations of state with respect to coordinate systems. 
In order to ascertain the compatibility of what is now known as the Oldroyd B model with thermodynamics,  this model thus needs to be complemented with an appropriate Helmholtz free energy making it possible to define an internal dissipation, without changing the dynamics itself. Of course, there are infinitely many different possibilities for doing this. 

One of the goals of the present article is precisely to provide such a thermodynamic extension of the Oldroyd B model, in a way that  phenomenologically reflects the Newtonian nature of the solvent fluid and the nonlinearly elastic nature of the polymer particles suspended in the solvent. Building on an idea of Francfort and Lopez-Pamies  \cite{GFOLP}, we introduce a Lagrangian formulation with an internal variable that produces the Oldroyd B dynamics in the Eulerian description. The second goal of the present article is to show that the associated internal dissipation remains always nonnegative for any future evolution, if and only if the initial value of the internal variable belongs to an entirely characterized set. It should be stressed that working in the Lagrangian formulation makes it much easier to achieve this, compared to the equivalent Eulerian formulation.

Compatibility with the second law of thermodynamics is oftentimes mentioned in the Oldroyd B, or more generally complex fluid, literature. Let us mention, among others, the works \cite{Astarita}, \cite{Leonov92}, and for works using the GENERIC framework, \cite{Brunk}, \cite{Dressler}, \cite{Suzuki}. It is however not always easy to discern what is exactly meant by such phrases as ``consistent with thermodynamics'', nor what the thermodynamic framework actually is and what the underlying constitutive assumptions are. We strive to be as explicit as we can in this respect below. 

We place ourselves within the tradition of the Coleman-Noll approach to thermomechanics, \cite{Coleman-Mizel}-\cite{ColemanNoll}. 
Given a set of constitutive assumptions for the Helm\-holtz free energy, the first Piolà-Kirchhoff stress tensor (in the Lagrangian description) and the flow rule for the internal variable, the well-known Coleman-Noll procedure aims at deriving relations and conditions on these constitutive laws that are necessary and sufficient for the Clausius-Duhem inequality to be satisfied by any evolution of the corresponding viscoelastic model, in particular for all initial values of the internal variable. One of the main outcomes of the Coleman-Noll procedure is a constitutive law for the internal dissipation, which is  given in terms of the assumed constitutive laws of the model.

It turns out that this classical approach cannot succeed in the case of the Oldroyd B fluid, see section 5.5 of \cite{HLDAR1}. The appropriate approach is what we call the \emph{conditional Coleman-Noll procedure},  which we introduce here for a general incompressible viscoelastic material with an internal variable. The same approach works, and is technically simpler, for a compressible viscoelastic material with an internal variable, but we need incompressibility for the Oldroyd B fluid application. The main point is that this conditional approach to the Coleman-Noll procedure takes into account the initial value of the internal variable in a crucial, nontrivial way.
Indeed, such a model may satisfy the second law for any evolution of the deformation if and only if the initial value of the internal variable is constrained to belong to some proper subset of the set of all possible initial values.

This article is as follows. We first recall in Section \ref{notation} the notation of \cite{HLDAR1} and review briefly the notion of objective derivative and the classical Oldroyd B model in Sections \ref{les derivees objectives} and \ref{section OB}. We then revisit in Section \ref{cadre general}
 the thermo-visco-elastic theory with internal variables described in \cite{HLDAR1}, this time  in the incompressible case. We introduce a conditional version both of the second law applied to such a model, and of the Coleman-Noll procedure. The condition in question corresponds to the set of initial values for the internal variables for which any future evolution satisfies the second law of thermodynamics. We accordingly develop a conditional version of the Coleman-Noll procedure, here in the incompressible case, in order to derive a constitutive law for the internal dissipation that is adequate for such models. 

Next in Section \ref{section lagrangienne}, we show how the starting idea of  \cite{GFOLP} can be modified in order to obtain the classical Oldroyd~B fluid as a specific instance of such a visco-elastic material. The internal variable we consider is symmetric-matrix valued, which is reminiscent of the conformation or configuration tensors used by several authors in complex fluid modeling, see for instance \cite{Giesekus}, \cite{Hulsen}, \cite{WapHulsen} among others, but is nonetheless slightly different.  

At this point, in Section \ref{le cas ou ca marche}, we apply the previous thermodynamic developments to the new Lagrangian formulation of the Oldroyd B model. In particular, we identify the set of initial conditions that make it satisfy  the second law, namely the set of symmetric positive semi-definite matrices, in both Lagrangian and Eulerian descriptions. This is an if and only if statement: if the initial condition of the internal variable is not positive semi-definite, then there are evolutions for which the dissipation is strictly negative at some point in time. It must be emphasized that in general, the internal variable does not always remain positive semi-definite during evolution, as opposed to what happens in \cite{Hulsen} or \cite{Brunk}. Therefore, this is not just a question of restricting the set of values that the internal variable may take during its evolutions, but a situation that is quite a bit subtler than that.

We conclude the article in Section \ref{pour quelques modeles de plus} with various  generalizations of the Oldroyd B model and their connection with the second law, that are obtained by the same Lagrangian approach. These include
 nonlinear versions of the Oldroyd B model, among which is the quadratic one obtained in \cite{GFOLP}, and the Zaremba-Jaumann and Oldroyd A fluid models.

\section{General notation}\label{notation}
As in \cite{HLDAR1}, we use the convention of denoting any quantity pertaining to the Lagrangian description with an uppercase letter and the corresponding Eulerian quantity with the corresponding lowercase letter. We also differentiate between a given quantity and a  constitutive law for that same quantity by using a hat or a tilde for the latter, \emph{e.g.} $A_m$ for the Helmholtz free energy specific density as opposed to $\widehat A_m$ for a constitutive law for this energy. On the Lagrangian side of things, starting from a deformation mapping $(X,t)\mapsto \phi(X,t)$, we will use as thermodynamic variables the deformation gradient $F(X,t)=\nabla_X\phi(X,t)$ and the deformation rate  $H(X,t)=\nabla_XV(X,t)=\frac{\partial F}{\partial t}(X,t)$, where $V$ is the velocity of particules. We will also need the acceleration gradient $\Gamma(X,t)=\frac{\partial H}{\partial t}(X,t)$. In the present article, we will ignore thermal effects. More precisely, we assume that mechanical effects are decoupled from thermal effects, so that we can only consider the mechanical side of thermodynamics. A more complete coupling can easily be added, see \cite{HLDAR1}. 

We will always work in an incompressible setting, which for simplicity is expressed by the fact that $\det F(X,t)=1$ in the Lagrangian description. 

On the Eulerian side, and following the above convention, we let $v(x,t)=V(X,t)$ for the Eulerian velocity and $h(x,t)=\nabla_xv(x,t)=H(X,t)F^{-1}(X,t)$ for its gradient, with the understanding that $(x,t)=\Phi(X,t)=(\phi(X,t),t)$. We also let $d(x,t)=\frac12\bigl(h(x,t)+h(x,t)^T\bigr)$ for the stretching tensor and $w(x,t)=\frac12\bigl(h(x,t)-h(x,t)^T\bigr)$ for the spin tensor. Incompressibility in the Eulerian description is expressed by $\div_x v(x,t)=0$.

We denote the first Piolà-Kirchoff stress tensor by $T_{\mathrm{R}}$ and the Cauchy stress tensor by $\sigma$, which are related by $\sigma(x,t)=T_{\mathrm{R}}(X,t)F^T(X,t)$ at all corresponding space-time points, using the fact that $\det F(X,t)=1$. Even though the Cauchy stress is an Eulerian quantity, it will sometimes be expedient to look at it from the Lagrangian point of view. In particular,  we will need its material derivative $\dot\sigma=\frac{\partial\sigma}{\partial t}+v_i\frac{\partial\sigma}{\partial x_i}$, which is just $\dot\sigma=\frac{\partial}{\partial t}(\sigma\circ\Phi)$. In terms of thermodynamic state functions, we use the Helmholtz free energy with specific density $A_m$, and set the reference volumic mass to $1$ for simplicity. 

We denote the set of $3\times 3$ matrices by $\M_3$, which is endowed with the Frobenius inner product $F:G=\tr(F^TG)$. We let $\M_3^+$ be the subset of matrices with strictly positive determinant, $\Sym_3$ the set of symmetric matrices, $\Sym_3^+$ (resp. $\overline{\Sym_3^+}$) the set of symmetric positive definite (resp. positive semi-definite) matrices, $\Skew_3$ the set of skew-symmetric matrices, $\SL(3)$ the set of matrices with determinant $1$, $\Sl(3)$ the set of trace-free matrices and $\SO(3)$ the set of rotation matrices. For any $M\in\M_3$, $\cof M$ denotes the cofactor matrix of $M$, and $\Sym(M)$ and $\Skew(M)$ respectively denote the symmetric and skew-symmetric parts of $M$. For $F\in\SL(3)$, there holds $F^{-1}=\cof F^T$.

We let $T_F\SL(3)=\{H\in \M_3; \cof F:H=0\}$ denote the tangent space of $\SL(3)$ at $F\in\SL(3)$, with of course $T_I\SL(3)=\Sl(3)$. Accordingly,  $T\SL(3)$ denotes the tangent bundle of $\SL(3)$ that consists of pairs $(F,H)$ with $F\in\SL(3)$ and $H\in T_F\SL(3)$. The tangent space of $T\SL(3)$ at $(F,H)$ is identified with 
$T_F\SL(3)\times T_F\SL(3)$. Observe that, by incompressibility, we have $(F(X,t),H(X,t))\in T_F\SL(3)$ for any deformation $\phi$.

For any differentiable function $\widehat A\colon T\SL(3)\to \R$, we identify the differential $D_{(F,H)}\widehat A$ with a pair of matrices $\frac{\partial \widehat A}{\partial F}(F,H)$ and $\frac{\partial \widehat A}{\partial H}(F,H)$ using the Frobenius inner product: $\langle D_{(F,H)}\widehat A,(K,L)\rangle=\frac{\partial \widehat A}{\partial F}(F,H):K+\frac{\partial \widehat A}{\partial H}(F,H):L$, with $\bigl(\frac{\partial \widehat A}{\partial F}(F,H),\frac{\partial \widehat A}{\partial H}(F,H)\bigr)\in T_F\SL(3)\times T_F\SL(3)$.

\section{Objective derivatives}\label{les derivees objectives}

Objective derivatives occur in situations when one wishes to differentiate Eulerian vector or tensor quantities, typically the Cauchy stress tensor, with respect to time, in a way that is compatible with frame-indifference. 
In Lagrangian terms, we are considering two deformations $\phi$ and $\phi^*$ that are related via
\begin{equation}\label{hyp cinematique AIM}
\phi^*(X,t)=Q(t)\phi(X,t)+a(t),
\end{equation}
where $Q$ and $a$ are arbitrary regular functions with values respectively in  $\SO(3)$ and $\R^3$. We express \eqref{hyp cinematique AIM} in Eulerian terms as $(x^*,t)=(Q(t)x+a(t),t)$.

The principle of frame-indifference, for which we refer to \cite{Truesdell-Noll}, requires that the corresponding Cauchy stresses must satisfy
\begin{equation}\label{expression AIM}
\sigma^*(Q(t)x+a(t),t)=Q(t)\sigma(x,t)Q(t)^T,
\end{equation}
or $\sigma^*(x^*,t)=Q(t)\sigma(x,t)Q(t)^T$ with obvious notation.

Loosely speaking, an objective derivative is a differential operator that is of first order in time and depends on $h$ in such a way that it transforms as above in the same circumstances. It can thus be used for constitutive purposes to derive frame-indifferent models of differential type.

\begin{definition}A first order in time differential operator $\bigpenta$ is an objective derivative, or is objective, if 
\begin{equation*}
\overset{\pentagon^*}{\sigma^{\smash{*}}}(x^*,t)=Q(t)\overset{\penta}{\sigma}(x,t)Q(t)^T,
\end{equation*}
for all functions $\sigma$, $Q$ and $a$ with values in $\Sym_3$, $\SO(3)$ and $\R^3$ respectively, where $\sigma^*$ and $\sigma$ are related via \eqref{expression AIM}.
\end{definition}

A constitutive differential equation for the Cauchy stress can then be assumed of the form
$$
\overset{\penta}{\sigma}(x,t)=G(\sigma(x,t),d(x,t))
$$
for instance in the simplest cases, and if $G$ is itself frame-indifferent, \emph{i.e.}, $G(Q\sigma Q^T,QdQ^T)=QG(\sigma,d)Q^T$, then the resulting differential constitutive law will be compatible with the principle of frame-indifference.

Many classical objective derivatives are of the form
\def\Ob{\mathrm{Ob}}
\begin{equation}\label{forme a priori}
\overset{\penta}{\sigma}=\dot \sigma+\Ob(\sigma,h),
\end{equation}
with $\Ob\colon\Sym_3\times \,\M_3\to\Sym_3$. 
Now, a function $\Ob_s\colon \Sym_3\times\Sym_3\to\Sym_3$ is said to be objective if 
\begin{equation}\label{fonction objective}
\Ob_s(Q\sigma Q^T, QdQ^T)=Q\Ob_s(\sigma , d)Q^T,
\end{equation}
for all $(\sigma,d)\in\Sym_3\times\Sym_3$ and $Q\in\SO(3)$. We note that the set of such objective functions  is entirely characterized, see \cite{Smith}. In particular, it contains all symmetric-valued polynomials in $(\sigma,d)$. The following result is stated in \cite{Gurtinandco}.

\begin{proposition}\label{description des derivees objectives}
An operator of the form \eqref{forme a priori} is objective if and only if 
\begin{equation}\label{forme a posteriori}
\Ob(\sigma ,h)=\sigma w- w\sigma +\Ob_s(\sigma,d).
\end{equation}
with $w=\Skew(h)$, $d=\Sym(h)$, and $\Ob_s\colon \Sym_3\times\Sym_3\to\Sym_3$ is an objective function. 
\end{proposition}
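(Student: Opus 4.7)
My plan is to translate the objectivity condition into an algebraic functional equation on $\Ob$ by carefully tracking how $h$ and $\dot\sigma$ transform under a change of frame, and then to exploit the freedom in choosing the rotation $Q(t)$ at a prescribed instant to isolate the $\sigma w-w\sigma$ term.

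First I would compute the transformation laws at a fixed space-time point. Setting $\Omega:=\dot Q(t_0)Q(t_0)^T$, which is skew-symmetric since $QQ^T=I$, a chain-rule computation on $v^*(x^*,t)=\dot Q(t)\phi(X,t)+Q(t)V(X,t)+\dot a(t)$ gives
$$
h^*=\Omega+QhQ^T,\qquad d^*=QdQ^T,\qquad w^*=\Omega+QwQ^T,
$$
while differentiating $\sigma^*\circ\Phi^*=Q(\sigma\circ\Phi)Q^T$ in time and using $\dot Q=\Omega Q$, $\dot Q^T=-Q^T\Omega$ yields
$$
\dot\sigma^*=Q\dot\sigma Q^T+\Omega\sigma^*-\sigma^*\Omega.
$$
Substituting into $\overset{\penta^*}{\sigma^*}=Q\overset{\penta}{\sigma}Q^T$ and simplifying, the objectivity of the operator $\dot\sigma+\Ob(\sigma,h)$ is equivalent to the functional identity
$$
\Ob(Q\sigma Q^T,\Omega+QhQ^T)+\Omega Q\sigma Q^T-Q\sigma Q^T\Omega=Q\Ob(\sigma,h)Q^T,
$$
which must hold for every $Q\in\SO(3)$, every skew $\Omega$, every $\sigma\in\Sym_3$ and every $h\in\M_3$.

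Sufficiency then follows by direct substitution: if $\Ob(\sigma,h)=\sigma w-w\sigma+\Ob_s(\sigma,d)$ with $\Ob_s$ objective, then expanding $\Ob(\sigma^*,h^*)$ with $w^*=\Omega+QwQ^T$, $d^*=QdQ^T$ produces exactly the extra terms $\sigma^*\Omega-\Omega\sigma^*$ needed to absorb the correction in the functional identity, the remaining piece $Q(\sigma w-w\sigma)Q^T+Q\Ob_s(\sigma,d)Q^T$ being $Q\Ob(\sigma,h)Q^T$ by objectivity of $\Ob_s$.

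For necessity, the key observation is that at any prescribed $(x_0,t_0)$ the values of $\sigma$ and $h$ can be chosen arbitrarily by picking an appropriate deformation, and that taking $Q(t):=\exp((t-t_0)\Omega)$ realizes $Q(t_0)=I$ together with any skew $\Omega$ as $\dot Q(t_0)$. Specializing the functional identity to $Q(t_0)=I$ gives
$$
\Ob(\sigma,h+\Omega)=\Ob(\sigma,h)+\sigma\Omega-\Omega\sigma
$$
for all $\sigma\in\Sym_3$, $h\in\M_3$, $\Omega\in\Skew_3$, and choosing $\Omega=-w$ yields
$$
\Ob(\sigma,h)=\sigma w-w\sigma+\Ob(\sigma,d),
$$
so I would define $\Ob_s$ as the restriction $(\sigma,d)\mapsto\Ob(\sigma,d)$ on $\Sym_3\times\Sym_3$, which indeed takes symmetric values since both $\Ob(\sigma,h)$ and $\sigma w-w\sigma$ lie in $\Sym_3$. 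To check objectivity of $\Ob_s$, I would specialize the functional identity to a constant rotation $Q(t)\equiv Q_0$ (so $\Omega=0$) with $h=d$ symmetric, which gives $\Ob_s(Q_0\sigma Q_0^T,Q_0dQ_0^T)=Q_0\Ob_s(\sigma,d)Q_0^T$. The entire argument is essentially formal once the transformation laws are in hand; the only real obstacle is the careful bookkeeping of the factors $\Omega$, $Q$, $Q^T$ in the kinematics, which must be done without algebraic slips.
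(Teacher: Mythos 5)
Your proof is correct. Note that the paper does not actually supply a proof of this proposition; it simply states the result with a citation to Gurtin, Fried and Anand, so there is no in-paper argument to compare against. Your approach — deriving the transformation laws $h^*=\Omega+QhQ^T$, $d^*=QdQ^T$, $w^*=\Omega+QwQ^T$, $\dot\sigma^*=Q\dot\sigma Q^T+\Omega\sigma^*-\sigma^*\Omega$, converting objectivity into the functional identity $\Ob(Q\sigma Q^T,\Omega+QhQ^T)+\Omega Q\sigma Q^T-Q\sigma Q^T\Omega=Q\Ob(\sigma,h)Q^T$, and then specializing first to $Q(t_0)=I$ with arbitrary $\dot Q(t_0)=\Omega$ (to peel off the $\sigma w-w\sigma$ term by choosing $\Omega=-w$) and then to constant $Q$ (to get objectivity of $\Ob_s$) — is the standard textbook derivation and is algebraically sound throughout, including the observation that $\Ob_s$ inherits $\Sym_3$-valuedness from the hypothesis on $\Ob$. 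The one point you leave implicit, which is worth stating, is that the definition of an objective operator quantifies over all $\Sym_3$-valued $\sigma$ and all deformations, so at a fixed space-time point the pair $(\sigma,h)\in\Sym_3\times\M_3$ is indeed free, and $Q(t)=\exp((t-t_0)\Omega)$ realizes any prescribed $(Q(t_0),\dot Q(t_0))=(I,\Omega)$; with that noted, the argument is complete.
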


Common examples where $\Ob_s$ is a simple symmetric-valued polynomial in $(\sigma,d)$ are:
\begin{itemize}
\item The Zaremba-Jaumann derivative, introduced in \cite{Zaremba},
$\overset{\square}{\sigma}=\dot{\sigma}+\sigma w-w\sigma,$ with $\Ob_s=0$, which is thus the simplest objective derivative in this sense. 
\item The Oldroyd A or lower convected derivative 
$\overset{\vartriangle}{\sigma}=\dot \sigma+h^T\!\sigma+\sigma h$ with $\Ob_s(\sigma,d)= d\sigma  +\sigma d$.
\item The Oldroyd B or upper convected derivative 
$\overset{\triangledown}{\sigma}=\dot \sigma-h\sigma-\sigma h^T$ with $\Ob_s(\sigma,d)= - d\sigma  -\sigma d$, cases A and B both introduced in \cite{Oldroyd}.
\end{itemize}
Note that the notation is not universal, and neither is the vocabulary, with corotational, covariant and contravariant rates also being in use for these objective derivatives.
We will encounter more objective derivatives of the same general form involving objective polynomials $\Ob_s$ in Section~\ref{nonlinear oldB}.

It is well known that the Oldroyd B derivative appears naturally, without reference to objectivity, by time differentiation of the Cauchy stress expressed with the second Piolà-Kirchhoff stress in the incompressible case, with the simple formula below.

\begin{proposition}\label{et voila OB}Let $\Sigma=F^{-1}\sigma \cof F$ be the second Piolà-Kirchhoff stress. For all deformations, we have
\begin{equation*}\label{apparition d'OB}
\overset{\triangledown}{\sigma}=F\frac{\partial\Sigma}{\partial t}F^T.
\end{equation*}
\end{proposition}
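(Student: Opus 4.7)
The plan is a direct computation based on rewriting the relation $\Sigma=F^{-1}\sigma\cof F$ in a form suitable for time differentiation, and then carefully tracking what is Lagrangian versus Eulerian. Since we are in the incompressible setting, $\det F=1$ gives $\cof F=F^{-T}$, so the definition of $\Sigma$ reads $\Sigma=F^{-1}\sigma F^{-T}$, which can be inverted into the more convenient
\begin{equation*}
\sigma(\phi(X,t),t)=F(X,t)\,\Sigma(X,t)\,F(X,t)^T.
\end{equation*}
Here the left-hand side is $\sigma\circ\Phi$ viewed as a function on the Lagrangian side, and $\Sigma$ is Lagrangian by construction, so both sides are genuinely functions of $(X,t)$.

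Next I would differentiate this identity with respect to $t$ at fixed $X$. By definition $\frac{\partial}{\partial t}(\sigma\circ\Phi)=\dot\sigma$, which is precisely the material derivative recalled in Section~\ref{notation}. For the right-hand side I use $\frac{\partial F}{\partial t}=H$ together with the relation $H=hF$, so that $\dot F=hF$ and $\dot F^T=F^T h^T$. The product rule then yields
\begin{equation*}
\dot\sigma=(hF)\Sigma F^T+F\frac{\partial\Sigma}{\partial t}F^T+F\Sigma (F^T h^T)=h\sigma+F\frac{\partial\Sigma}{\partial t}F^T+\sigma h^T,
\end{equation*}
where in the last equality I reused $\sigma=F\Sigma F^T$ to collapse the outer factors.

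Rearranging and recognizing the Oldroyd B derivative as defined in Section~\ref{les derivees objectives},
\begin{equation*}
F\frac{\partial\Sigma}{\partial t}F^T=\dot\sigma-h\sigma-\sigma h^T=\overset{\triangledown}{\sigma},
\end{equation*}
which is the claimed formula. There is no real obstacle here beyond keeping track of descriptions: the only substantive inputs are the incompressibility condition (used to turn $\cof F$ into $F^{-T}$), the identity $H=hF$ (used to replace the Lagrangian $\dot F$ with its Eulerian counterpart), and the fact that for a Lagrangian field like $\Sigma$ the partial time derivative at fixed $X$ is itself the natural object to appear in the formula.
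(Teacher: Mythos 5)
Your computation is correct: under the standing incompressibility assumption $\det F=1$ you rightly reduce $\Sigma=F^{-1}\sigma\cof F$ to $\sigma=F\Sigma F^T$, and differentiating at fixed $X$ with $\dot F=H=hF$ gives exactly $\dot\sigma-h\sigma-\sigma h^T=F\frac{\partial\Sigma}{\partial t}F^T$, which is the upper convected derivative as defined in Section~\ref{les derivees objectives}. The paper states this proposition without proof as a well-known fact, and your argument is precisely the standard computation it implicitly relies on.
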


This result may induce a slight preference for the Oldroyd B derivative  among all arbitrary objective derivatives. Moreover, the second Piolà-Kirchhoff stress is especially well suited to the Lagrangian formulation and ensuing thermodynamical study of the Oldroyd B fluid model as we will see below.
\section{The Oldroyd B complex fluid model}\label{section OB}

We give a very brief description of the Oldroyd B fluid model, introduced in~\cite{Oldroyd}. We refer to \cite{Hinch} for historical and physical insights and \cite{Renardy} for a review of mathematical results pertaining to this model. The Oldroyd B model is a model for an incompressible viscoelastic fluid that is supposed to be a dilute suspension of polymer molecules in a Newtonian fluid solvent. It is a model of differential type, see \cite{Truesdell-Noll}, in the sense that the Cauchy stress is not expressed as a function of thermodynamic variables by means of a constitutive law, but is given by a first order differential equation in time that involves the Oldroyd B derivative
\begin{equation}\label{le modele}
\sigma+\lambda_1\overset{\triangledown}{\sigma}=2\eta\bigl(d+\lambda_2 \overset{\triangledown}{d}\bigr),
\end{equation}
where $\eta>0$ is a global viscosity coefficient and $\lambda_1,\lambda_2>0$ are relaxation times. For the model to be physically relevant, it is assumed that $\lambda_2\le \lambda_1$. It is frame-indifferent by construction. Of course, the tensor $\sigma$ above does not include the indeterminate pressure term $-pI$  that is the Lagrange multiplier of the incompressibility constraint $\tr(d)=0$. In the sequel, by Cauchy stress, we will mean Cauchy stress modulo the indeterminate pressure as long as no initial value is specified for the differential equation \eqref{le modele}.

There is a classical additive decomposition of the Cauchy stress that simplifies equation \eqref{le modele}, namely $\sigma=\sigma_s+\sigma_p$, obtained by letting $\eta_s=\frac{\lambda_2}{\lambda_1}\eta$ and $\eta_p=\bigl(1-\frac{\lambda_2}{\lambda_1}\bigr)\eta$ and
\begin{equation}\label{les deux stress}
\sigma_s=2\eta_sd\text{ and }\sigma_p+\lambda_1\overset{\triangledown}{\sigma}_p=2\eta_pd,
\end{equation}
where $\sigma_s$ is the Newtonian solvent stress with solvent viscosity $\eta_s$ and $\sigma_p$ is interpreted as a polymer stress with polymer viscosity $\eta_p$. Conversely, \eqref{les deux stress} implies \eqref{le modele} with  $\eta=\eta_s+\eta_p$ and  $\lambda_2=\frac{\lambda_1\eta_s}{\eta_s+\eta_p}$ and the two formulations are equivalent. 

There are many different ways of deriving the Oldroyd B model from various hypotheses. We concentrate below on a phenomenological  Lagrangian approach, which makes it possible to test the compatibility of the Oldroyd B model with the second law of thermodynamics.

\section{The conditional Coleman-Noll procedure for incompressible viscoelastic materials with internal variables}\label{cadre general}
We continue in the spirit of the general thermo-visco-elastic framework developed in \cite{HLDAR1}, by using the thermodynamic variables $F$ and $H$, complemented by a $\R^k$-valued\footnote{To fix ideas. We just need equation \eqref{KVM flow rule} to make sense.} internal variable $\Xi$ for some $k$, without thermal effects. This framework relies heavily on the exploitation of the second law of thermodynamics in the form of the Clausius-Duhem inequality via the Coleman-Noll procedure, here in an incompressible context. We primarily work here in the Lagrangian description for convenience, and will navigate more equally between Lagrangian and Eulerian descriptions in the specific case of the Oldroyd B fluid later on. 

The definition of such an incompressible viscoelastic model rests on a set of a priori constitutive assumptions. For simplicity, we assume that the material is homogeneous so that there is no dependence on the material point $X$. As already mentioned, the reference volumic mass is set equal to $1$. We are thus given a constitutive law for the Helmholtz free energy specific density of our material $\widehat A_m\colon T\SL(3)\times\R^k\to\R$, so that the free energy density\footnote{For completeness, we should add a term to the free energy that is only a function of temperature, so that entropy balance can also be written down. In the cases of interest herein, all the mechanics is decoupled from thermal effects, which explains why we can afford not to write it here.} at point 
$(X,t)$ is given by 
\begin{equation}\label{lc de Helmoltz}
A_m(X,t)=\widehat A_m(F(X,t),H(X,t),\Xi(X,t)).
\end{equation}
We also are given a constitutive law $\widehat T_{\mathrm{R}}\colon T\SL(3)\times\R^k\to\M_3$ for part of the first Piolà-Kirchhoff stress,  so that likewise 
\begin{equation}\label{lc de PK1}
T_{\mathrm{R}}(X,t)=\widehat T_{\mathrm{R}}(F(X,t),H(X,t),\Xi(X,t))-p(X,t)\cof F(X,t),
\end{equation}
where $p(X,t)$ is a pressure term that is indeterminate in the sense that is not given by a constitutive law using the thermodynamic variables, due to the incompressibility constraint. It can only be determined once the dynamics initial-boundary value problem with given body forces is eventually solved. It will thus remain indeterminate here for our constitutive purposes.

 Finally, we are given a  differential constraint for the internal variable of the form 
\begin{equation}\label{KVM flow rule}
\frac{\partial \Xi}{\partial t}(X,t)=\widehat K(F(X,t),H(X,t),\Xi(X,t)),
\end{equation}
where the flow rule $\widehat K\colon T\SL(3)\times\R^k\to\R^k$ is the last constitutive ingredient of the model. Since the internal variable is assumed to satisfy an ordinary differential equation in time for all $X$, it also needs an initial value $\Xi_0\in\R^k$ at some arbitrary initial time $t_0$ for each $X$ in order to be determined for a given deformation $\phi$. In this sense, $\Xi$ is not strictly speaking an independent thermodynamic variable per se, but its initial value plays such a role.

As mentioned in the Introduction, the classical Coleman-Noll procedure cannot succeed in the case of the Oldroyd B fluid, see section 5.5 of \cite{HLDAR1}. We introduce below a \emph{conditional Coleman-Noll procedure} for a general incompressible viscoelastic material with an internal variable, that takes into account the initial value of the internal variable in a crucial, nontrivial way. This conditional procedure will succeed in the Oldroyd B case, see Section \ref{le cas ou ca marche}.

To be more precise, given some reference configuration $\Omega\subset\R^3$, we say that a deformation $\phi\colon\bar\Omega\times\R\to\R^3$ is admissible if it is of class $C^1$ and piecewise $C^2$ in time, $\det\nabla_X\phi(X,t)=1$ for all $X\in\Omega$ and $t\ge t_0$, and for all $t\ge t_0$, $\phi(\cdot,t)$ is a diffeomorphism between $\bar\Omega$ and $\overline{\phi(\Omega,t)}$. Here we have fixed an arbitrary initial time $t_0$.

The following quantity is known as the internal dissipation,
\begin{equation}\label{def dint}
D_{\text{int}}(X,t)= - \frac{\partial A_m}{\partial t}(X,t)+T_{\mathrm{R}}(X,t):\nabla_X V(X,t).
\end{equation} 
The internal dissipation is a power term which appears as a source term in the heat equation when thermal effects are also taken into consideration, or equivalently, when divided by the temperature, as an internal entropy source. 

The Clausius-Duhem inequality reduces here to the mechanical part of the Clausius-Planck inequalities, which simply reads
 \begin{equation}\label{encore une CD}
D_{\text{int}}(X,t)\ge0\text{ for all $(X,t)$.}
  \end{equation} 
The second law of thermodynamics stipulates that this inequality must hold for the evolutions that can actually occur. Which evolutions are allowed by the second law for a given model then becomes  a constitutive matter. Indeed, the internal dissipation can be computed for all future deformations and all initial values of the internal variable by replacing the quantities in \eqref{def dint} by their expressions in terms of the constitutive laws. This leads us to the following definition.

%
%
\begin{definition}\label{ze big def}We will say that a viscoelastic model \eqref{lc de Helmoltz}, \eqref{lc de PK1}
and \eqref{KVM flow rule} satisfies the second law for an initial value of the internal variable $\Xi(.,0)$ if, given any admissible deformation, the corresponding internal dissipation is nonnegative for all $t\ge t_0$. Letting $\mathcal{C}=\{\Xi_0\in\R^k; \Xi_0=\Xi(X,0)\text{ such that the above holds}\}$,\footnote{Using the constitutive laws \eqref{lc de Helmoltz}, \eqref{lc de PK1}
and \eqref{KVM flow rule} and the definition of the internal dissipation \eqref{def dint}, it is easy to see that this set does not depend on $X\in \Omega$. If the material is not supposed to be homogeneous, the corresponding sets $\mathcal{C}_X$ may depend on $X$.} we say that the second law is \emph{$\mathcal{C}$-conditionally} satisfied by the model.
\end{definition}

Note that the condition in question is a condition that bears on a thermodynamic variable, and not on the constitutive laws. 
According to the definition, if a given model satisfies the second law conditionally, such a model should only be used with initial values $\Xi_0\in\mathcal{C}$ at each point $X$. However, it is of utmost importance to notice that this does not preclude $\Xi(\cdot,t)$ from exiting the set $\mathcal{C}$ at some time $t_1\ge t_0$, still with nonnegative internal dissipation at all subsequent times. In other words, \emph{we do not assume the set $\mathcal{C}$ to be invariant under the flow} 
\eqref{KVM flow rule}.
In particular, we are not in a special case of \cite{HLDAR1} where such an invariance was assumed, apart from the considerations of incompressibility which were also not treated therein. Incompressibility only adds technical complications but no conceptual difference. We will see later on that the Oldroyd B fluid satisfies the conditional second law with a non invariant condition set.

From now on, we thus assume that we have an incompressible viscoelastic model that satisfies the second law conditionally as defined in Definition \ref{ze big def}.

The idea of the Coleman-Noll procedure, which we revisit below, is to plug arbitrary deformations and internal variables into the constitutive laws, compute the corresponding internal dissipation via \eqref{def dint} using \eqref{lc de Helmoltz}, \eqref{lc de PK1}, \eqref{KVM flow rule}, and extract all the constitutive consequences of \eqref{encore une CD}. The arbitrary deformation step is quite often not performed very carefully. It is better to proceed with caution as in \cite{HLDAR1} by explicitly constructing a sufficient set of such deformations, even more so in the incompressible case here. These deformations are solutions of the dynamics equation simply by adjusting the body forces.  

Let us start with such a construction of an extension of any admissible deformation that has an arbitrary acceleration gradient right after some given  time. We will say that a time $t_1\ge  t_0$ is \emph{regular} for a pair $(\phi, X_0)$ if  $t\mapsto \phi(X_0,t)$ is of class $C^2$ in a neighborhood of $t_1$. By assumption, the set of times that are not regular is discrete.

\begin{proposition}\label{construction prudente}Let $X_0\in\Omega$, $\phi$ an admissible deformation and $t_1\ge t_0$ a regular time. Then, for all $\Gamma^*\in T_{F(X_0,t_1)}\SL(3)$, there exists an admissible deformation $\phi^*$ that extends $\phi$ for $t\ge t_1$ and such that $\Gamma^*(X_0,t)\to \Gamma^*\text{ when }t\to t_1^+$.
\end{proposition}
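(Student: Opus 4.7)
The plan is to construct $\phi^*$ by prescribing a divergence-free Eulerian velocity field $v^*$ for $t\ge t_1$ and taking its flow starting from the configuration reached at $t_1$; this makes the incompressibility of $\phi^*$ automatic. Let $x_0=\phi(X_0,t_1)$, $F_1=F(X_0,t_1)$, $H_1=H(X_0,t_1)$, $h_1=H_1F_1^{-1}=h(x_0,t_1)$, and $v_1=v(\cdot,t_1)$ on $\bar\Omega_1=\overline{\phi(\Omega,t_1)}$, smoothly extended to an $\R^3$-neighborhood if necessary. For any smooth $v^*(x,t)$ with $\div_xv^*=0$ and $v^*(\cdot,t_1)=v_1$, let $\psi(\cdot,t)$ be its flow with $\psi(y,t_1)=y$ and set
\[
\phi^*(X,t)=\begin{cases}\phi(X,t),& t\le t_1,\\ \psi(\phi(X,t_1),t),& t\ge t_1.\end{cases}
\]
This $\phi^*$ is $C^1$ in $t$ across $t_1$ and piecewise $C^2$, volume-preserving because $\div v^*=0$, and for $t$ close enough to $t_1$ a diffeomorphism $\bar\Omega\to\overline{\phi^*(\Omega,t)}$ by continuity of the flow near the identity; after smoothly reverting $v^*$ to $v_1$ past some $t_1+\varepsilon$, one obtains a globally admissible $\phi^*$.

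It remains to choose $v^*$ so that $\Gamma^*(X_0,t_1^+)=\Gamma^*$. Writing $Y(y,t)=\nabla_y\psi(y,t)$, one has $\partial_tY=(h^*\circ\psi)Y$ with $h^*=\nabla_xv^*$; differentiating once more at $t=t_1$, where $Y=\mathrm{Id}$, yields
\[
\Gamma^*(X_0,t_1^+)=\bigl(\dot h^*(x_0,t_1)+h_1^2\bigr)F_1,\qquad \dot h^*=\partial_th^*+(v^*\cdot\nabla)h^*.
\]
Since $v^*(\cdot,t_1)=v_1$ already fixes $h^*$ and $(v^*\cdot\nabla)h^*$ at $t_1$, matching the target is equivalent to prescribing
\[
\partial_th^*(x_0,t_1)=M^*:=\Gamma^*F_1^{-1}-h_1^2-(v_1\cdot\nabla h)(x_0,t_1).
\]
Because $\tr h\equiv 0$ gives $\tr((v_1\cdot\nabla)h)=0$, and because the tangency condition on $\Gamma^*$ (that it defines, together with $H_1$, a tangent vector to $T\SL(3)$ at $(F_1,H_1)$) amounts to $\tr(\Gamma^*F_1^{-1})=\tr(h_1^2)$, a short algebraic check gives $\tr M^*=0$.

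I then take $w(x)=M^*(x-x_0)$, a globally defined linear vector field on $\R^3$ with $\nabla w\equiv M^*$ and $\div w=\tr M^*=0$, and set
\[
v^*(x,t)=v_1(x)+\chi(t-t_1)\,w(x),
\]
where $\chi$ is smooth on $\R$, vanishes on $(-\infty,0]$, satisfies $\chi'(0^+)=1$, and is compactly supported in $[0,\varepsilon]$. Then $v^*$ is divergence-free, matches $v_1$ at $t=t_1$, and produces $\partial_th^*(x_0,t_1)=M^*$, closing the construction. The main obstacle, more algebraic than analytic, is identifying the correct matrix $M^*$ and verifying that the tangency hypothesis on $\Gamma^*$ is exactly the tracelessness $\tr M^*=0$ needed for a divergence-free $v^*$ to realize the target; once this is done, the choice of $w$ is immediate and the remaining admissibility checks follow from standard properties of smooth divergence-free flows.
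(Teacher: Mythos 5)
Your route is genuinely different from the paper's: you prescribe an Eulerian divergence-free velocity field for $t\ge t_1$ and flow the configuration $\phi(\cdot,t_1)$, whereas the paper makes a purely Lagrangian, spatially affine modification, $\phi^*(X,t)=e^{((t-t_1)_+)^2M}\phi(X,t)$ with $M\in\Sl(3)$, computes $\Gamma^*(X_0,t)\to 2MF(X_0,t_1)+\Gamma(X_0,t_1)$ and solves for $M$. The paper's construction costs nothing: incompressibility is automatic since $\det e^{((t-t_1)_+)^2M}=1$, velocity continuity at $t_1$ is automatic since $((t-t_1)_+)^2$ has vanishing derivative there, and only $F$, $H$, $\Gamma$ at $(X_0,t)$ near $t_1$ are ever used.

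As written, your construction has concrete gaps. (i) The extension of $v_1$ off $\overline{\phi(\Omega,t_1)}$ must itself be divergence-free: the flowed domain does not stay in $\overline{\phi(\Omega,t_1)}$, and $\det\nabla_X\phi^*=1$ requires $\div_x v^*=0$ on the \emph{current} configuration; an arbitrary smooth extension does not provide this, so your two requirements ``$\div_x v^*=0$'' and ``$v^*(\cdot,t_1)=v_1$'' conflict unless you prove a divergence-free extension lemma (collar construction, flux considerations if $\Omega$ has cavities, plus a cutoff in time before the domain exits the collar). (ii) Your $M^*$ contains $(v_1\cdot\nabla)h(x_0,t_1)$, i.e.\ second spatial derivatives of the velocity at time $t_1$; admissible deformations are only of class $C^1$ and piecewise $C^2$ \emph{in time}, and the regular-time hypothesis is purely temporal, so this term need not exist — the paper's proof deliberately avoids any extra spatial regularity. (iii) The ``short algebraic check'' that the tangency hypothesis amounts to $\tr(\Gamma^*F_1^{-1})=\tr(h_1^2)$ is asserted, not given, and under the paper's stated convention ($\Gamma^*\in T_{F_1}\SL(3)$ means $\cof F_1:\Gamma^*=0$, i.e.\ $\tr(\Gamma^*F_1^{-1})=0$) it is false; with that reading $\tr M^*=-\tr(h_1^2)\ne0$ in general and your linear field $w$ is not divergence-free. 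What your construction actually reaches is the affine set $\Gamma(X_0,t_1)+T_{F_1}\SL(3)$, i.e.\ $\{\Gamma^*:\tr(\Gamma^*F_1^{-1})=\tr(h_1^2)\}$ — which is indeed the most any incompressible extension can reach, since differentiating $\det F^*=1$ twice forces this trace identity in the limit, and it is also exactly what the paper's exponential construction reaches and what the later applications use (adding $\lambda\,\frac{\partial\widehat A_m}{\partial H}$ to the existing acceleration gradient and letting $\lambda\to\infty$). So your reading is the workable one, but the discrepancy with the literal statement should be flagged and proved, not folded into an unproven equivalence.
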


\begin{proof}Let us consider such an admissible deformation $\phi$ and time $t_1\ge t_0$. We wish to modify $\phi$ after time $t_1$ appropriately. To this effect, take any $M\in\Sl(3)$ and set
$$\phi^*(X,t)=e^{((t-t_1)_+)^2M}\phi(X,t)\text{ for all }X\in\Omega.$$
This clearly is an incompressible deformation that agrees with $\phi$ for $t\le t_1$. Moreover,
$$F^*(X,t)=e^{((t-t_1)_+)^2M}F(X,t)\to F(X,t_1)\text{ when }t\to t_1.$$
 Likewise for the deformation rate,
\begin{multline*}
H^*(X,t)=2(t-t_1)_+Me^{((t-t_1)_+)^2M}F(X,t)+e^{((t-t_1)_+)^2M}H(X,t)\\\to H(X,t_1)\text{ when }t\to t_1.
\end{multline*}
The deformation $\phi^*$ is clearly piecewise $C^2$ in time and is thus an admissible deformation that extends $\phi$ for $t>t_1$.  Computing its acceleration gradient at $X_0$ for $t>t_1$ small enough, we get
\begin{multline*}
\Gamma^*(X_0,t)=2\bigl(1+2(t-t_1)^2M\bigr)Me^{(t-t_1)^2M}F(X_0,t)\\
+4(t-t_1)Me^{(t-t_1)^2M}H(X_0,t)+e^{(t-t_1))^2M}\Gamma(X_0,t)\\\to 2MF(X_0,t_1)+\Gamma(X_0,t_1)\text{ when }t\to t_1^+.
\end{multline*}
By construction, $2MF(X_0,t_1)+\Gamma(X_0,t_1)\in T_{F(X_0,t_1)}\SL(3)$. Given any $\Gamma^*\in T_{F(X_0,t_1)}\SL(3)$, we can set $M=\frac12(\Gamma^*-\Gamma(X_0,t_1))\cof F^T(X_0,t_1)\in\Sl(3)$ to ensure that 
$\Gamma^*(X_0,t)\to \Gamma^*\text{ when }t\to t_1^+$.
\end{proof}

The first step in the conditional Coleman-Noll procedure is to write down the master inequality derived from \eqref{encore une CD}.

\begin{proposition}\label{ze master ineq}If a viscoelatic model \eqref{lc de Helmoltz}-\eqref{lc de PK1}-\eqref{KVM flow rule}  $\mathcal{C}$-conditionally satisfies the second law, there holds
\begin{multline}\label{mas ineq}
\Bigl(\widehat T_{\mathrm{R}}(F(X,t),H(X,t),\Xi(X,t))-\frac{\partial\widehat A_m}{\partial F}(F(X,t),H(X,t),\Xi(X,t))\Bigr):H(X,t)\\
-\frac{\partial\widehat A_m}{\partial H}(F(X,t),H(X,t),\Xi(X,t)):\Gamma(X,t)\\
-\frac{\partial\widehat A_m}{\partial \Xi}(F(X,t),H(X,t),\Xi(X,t))\cdot\widehat K(F(X,t),H(X,t),\Xi(X,t))\ge 0
\end{multline}
for all admissible deformations $\phi$, all $X\in\Omega$, all $\Xi(.,0)\in\mathcal{C}$ and at all corresponding regular times $t\ge t_0$.
\end{proposition}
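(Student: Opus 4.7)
The plan is to derive the master inequality by a direct chain-rule expansion of $D_{\text{int}}$, with careful attention to the way the indeterminate pressure drops out because of the incompressibility constraint. I would start by fixing an admissible deformation $\phi$, an initial value $\Xi(\cdot,0)\in\mathcal{C}$, a point $X\in\Omega$ and a regular time $t\ge t_0$, so that all the time derivatives involved exist in the classical sense.

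First I would plug the constitutive law \eqref{lc de Helmoltz} into the definition \eqref{def dint} of $D_{\text{int}}$ and differentiate using the chain rule, together with $\partial_tF=H$, $\partial_tH=\Gamma$ and the flow rule \eqref{KVM flow rule}, to get
\begin{multline*}
\frac{\partial A_m}{\partial t}(X,t)=\frac{\partial \widehat A_m}{\partial F}(F,H,\Xi):H+\frac{\partial \widehat A_m}{\partial H}(F,H,\Xi):\Gamma\\+\frac{\partial \widehat A_m}{\partial \Xi}(F,H,\Xi)\cdot \widehat K(F,H,\Xi),
\end{multline*}
where all arguments are evaluated at $(X,t)$.

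Next I would handle the stress power term $T_{\mathrm{R}}:\nabla_XV=T_{\mathrm{R}}:H$. Here comes the only nontrivial observation: using \eqref{lc de PK1},
\[
T_{\mathrm{R}}:H=\widehat T_{\mathrm{R}}(F,H,\Xi):H-p\,\cof F:H,
\]
and the pressure contribution vanishes because incompressibility forces $H(X,t)\in T_{F(X,t)}\SL(3)$, i.e.\ $\cof F:H=0$ by definition of the tangent space. This is precisely where the indeterminate multiplier $p$ is eliminated from the dissipation, which is the reason it need not enter the master inequality. Combining with the previous display and invoking $D_{\text{int}}(X,t)\ge 0$, which is granted by the hypothesis that the model $\mathcal{C}$-conditionally satisfies the second law and that $\Xi(\cdot,0)\in\mathcal{C}$, yields exactly \eqref{mas ineq}.

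There is no real obstacle here beyond bookkeeping; the statement is essentially a clean rewriting of $D_{\text{int}}\ge 0$ in terms of the constitutive laws. The two points to not skip over are that all derivatives exist at regular times (hence the restriction to such $t$), and that the pressure term disappears thanks to the tangency condition $\cof F:H=0$. The substantive use of the initial data $\Xi(\cdot,0)\in\mathcal{C}$ and of the freedom afforded by Proposition~\ref{construction prudente} will come only later, when extracting pointwise constitutive consequences from \eqref{mas ineq}; at the level of this proposition it is enough that the inequality holds along every admissible evolution issued from $\mathcal{C}$.
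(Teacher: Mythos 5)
Your proof is correct and follows the same route as the paper's: chain-rule expansion of $\partial_tA_m$ via \eqref{lc de Helmoltz}, substitution of the flow rule \eqref{KVM flow rule}, and elimination of the pressure term using $H(X,t)\in T_{F(X,t)}\SL(3)$, i.e.\ $\cof F:H=0$. The bookkeeping, the restriction to regular times, and the role of $\Xi(\cdot,0)\in\mathcal{C}$ are all handled exactly as in the paper's argument.
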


\begin{proof}Let us apply the chain rule to equation \eqref{lc de Helmoltz} at such a regular time  $t$. This yields
\begin{multline}\label{derive de Helm}
\frac{\partial A_m}{\partial t}(X,t)=\frac{\partial\widehat A_m}{\partial F}(F(X,t),H(X,t),\Xi(X,t)):H(X,t)\\
+\frac{\partial\widehat A_m}{\partial H}(F(X,t),H(X,t),\Xi(X,t)):\Gamma(X,t)\\
+\frac{\partial\widehat A_m}{\partial \Xi}(F(X,t),H(X,t),\Xi(X,t))\cdot\frac{\partial \Xi}{\partial t}(X,t),
\end{multline}
where the final inner product $\cdot$ is that of $\R^k$. The result then follows from substituting the flow rule \eqref{KVM flow rule} in the last term in \eqref{derive de Helm}, and from the constitutive law \eqref{lc de PK1} for the second term in the definition of the internal dissipation \eqref{def dint}. Indeed, the indeterminate pressure does not contribute to the dissipation since
$$-p(X,t)\cof F(X,t):H(X,t)=0$$
because $H(X,t)\in T_{F(X,t)}\SL(3)$.
\end{proof}

At this point, it should be fairly clear that the material point $X$ only plays the role of a parameter and that we can most often omit it from the notation in order to save a little space.

As in \cite{HLDAR1}, there is a natural decomposition of the constitutive law for the first Piolà-Kirchhoff stress
\begin{equation}\label{decomposition du stress}\widehat T_{\mathrm{R}}(F,H,\Xi)=\widehat T_{\mathrm{Rd}}(F,H,\Xi)+\frac{\partial\widehat A_m}{\partial F}(F,H,\Xi),
\end{equation}
where $\widehat T_{\mathrm{Rd}}=\widehat T_{\mathrm{R}}-\frac{\partial\widehat A_m}{\partial F}$ is the constitutive law for the kinematically viscous stress. Note that there will in general be no reason not to have $H$ as an argument in $\widehat A_m$.  

We can now clarify the relationship between the condition set $\mathcal{C}$ and the constitutive laws. 
We let 
\begin{multline}\label{def de C0}\mathcal{C}_0=\Bigl\{\Xi_0\in\R^k;\widehat T_{\mathrm{Rd}}(F,H,\Xi_0):H
-\frac{\partial\widehat A_m}{\partial \Xi}(F,H,\Xi_0)\cdot\widehat K(F,H,\Xi_0)\ge 0\\
\text{and }\frac{\partial\widehat A_m}{\partial H}(F,H,\Xi_0)=0,\text{ for all }(F,H)\in T\SL(3)\Bigr\},
\end{multline}
a set which is defined by the constitutive laws.
\begin{proposition}\label{carac de C}
We have $\mathcal{C}\subset\mathcal{C}_0$.
\end{proposition}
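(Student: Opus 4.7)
The plan is to deduce both constraints defining $\mathcal{C}_0$ by feeding the master inequality \eqref{mas ineq} a sufficiently rich family of admissible deformations. Fix $\Xi_0\in\mathcal{C}$ and an arbitrary $(F,H)\in T\SL(3)$. I first exhibit a concrete admissible deformation realizing $(F(X,t_0),H(X,t_0))=(F,H)$ at every $X\in\Omega$ and $\Xi(\cdot,t_0)\equiv\Xi_0$: set $M=HF^{-1}$, which lies in $\Sl(3)$ precisely because $H\in T_F\SL(3)$ means $\cof F:H=0$, and take the affine deformation
\begin{equation*}
\phi(X,t)=e^{(t-t_0)M}F\,X.
\end{equation*}
This map is $C^\infty$ in time, has determinant $1$ at every time, and is a diffeomorphism; one checks $F(X,t_0)=F$ and $H(X,t_0)=H$ at every $X$, with every instant regular.

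Next I invoke Proposition \ref{construction prudente} at the regular time $t_1=t_0$: for any prescribed $\Gamma^*\in T_F\SL(3)$, it produces an admissible extension $\phi^*$ of $\phi$ whose acceleration gradient at some fixed $X_0\in\Omega$ satisfies $\Gamma^*(X_0,t)\to\Gamma^*$ as $t\to t_0^+$. Because $\phi^*$ still starts from $\Xi(\cdot,t_0)=\Xi_0\in\mathcal{C}$, the master inequality \eqref{mas ineq} applies at every regular $t>t_0$; letting $t\to t_0^+$ and using continuity of the constitutive laws together with the decomposition \eqref{decomposition du stress} gives
\begin{equation*}
\widehat T_{\mathrm{Rd}}(F,H,\Xi_0):H-\frac{\partial\widehat A_m}{\partial H}(F,H,\Xi_0):\Gamma^*-\frac{\partial\widehat A_m}{\partial\Xi}(F,H,\Xi_0)\cdot\widehat K(F,H,\Xi_0)\ge 0
\end{equation*}
for every $\Gamma^*\in T_F\SL(3)$.

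Finally, $\Gamma^*$ appears linearly and ranges over the whole linear subspace $T_F\SL(3)$, where, by the differentiation convention of Section \ref{notation}, $\frac{\partial\widehat A_m}{\partial H}(F,H,\Xi_0)$ also lives. The inequality can therefore hold for all such $\Gamma^*$ only if $\frac{\partial\widehat A_m}{\partial H}(F,H,\Xi_0)=0$, which is the second defining condition of $\mathcal{C}_0$. Once this vanishes, the residual inequality is exactly the first defining condition of $\mathcal{C}_0$, so $\Xi_0\in\mathcal{C}_0$. The only real subtlety is ensuring that one can freely prescribe the initial jet $(F,H,\Gamma)$ by an \emph{incompressible} deformation, and this is precisely what the affine template above plus Proposition \ref{construction prudente} deliver.
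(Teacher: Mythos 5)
Your proposal is correct and follows essentially the same route as the paper: an affine incompressible deformation realizing an arbitrary $(F,H)\in T\SL(3)$ at $t_0$ (your $e^{(t-t_0)HF^{-1}}F$ versus the paper's $Fe^{(t-t_0)F^{-1}H}$), the extension from Proposition \ref{construction prudente} to prescribe $\Gamma^*$, the master inequality \eqref{mas ineq} in the limit $t\to t_0^+$, and then linearity in $\Gamma^*$ over $T_F\SL(3)$ (the paper takes $\Gamma^*=\lambda\frac{\partial\widehat A_m}{\partial H}$, $\lambda\to+\infty$) to force $\frac{\partial\widehat A_m}{\partial H}=0$, leaving the dissipation inequality. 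No substantive differences; your explicit limiting step is merely a slightly more detailed version of what the paper leaves implicit.
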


\begin{proof}Let us take $F\in \SL(3)$ and $H\in T_F\SL(3)$. The deformation $\phi(X,t)=Fe^{(t-t_0)F^{-1}H}X$ is then clearly admissible and such that $F(X,t_0)=F$ and $H(X,t_0)=H$. Using the same proof as that of Proposition \ref{construction prudente} with an arbitrary $\Gamma^*\in T_F\SL(3)$, we can modify it into another admissible deformation $\phi^*$ that has the same deformation gradient and rate, and $\Gamma^*$ as acceleration gradient at $t=t_0$. It follows that, 
$$-\frac{\partial\widehat A_m}{\partial H}(F,H,\Xi_0):\Gamma^*+
\widehat T_{\mathrm{Rd}}(F,H,\Xi_0):H
-\frac{\partial\widehat A_m}{\partial \Xi}(F,H,\Xi_0)\cdot\widehat K(F,H,\Xi_0)\ge 0.
$$
Taking $\Gamma^*=\lambda \frac{\partial\widehat A_m}{\partial H}(F,H,\Xi_0)$ and letting $\lambda\to+\infty$, we obtain the second part in the definition of $\mathcal{C}_0$. What is left in the above inequality is then the first part in this definition.
\end{proof}


Let us pursue the conditional Coleman-Noll procedure with the derivation of a form of constitutive law for the internal dissipation. Let us define $\widehat D_{\textnormal{int}}\colon T\SL(3)\times\R^k\to \R$ by
\begin{equation}\label{def de lc de diss}
\widehat D_{\textnormal{int}}(F,H,\Xi)=\widehat T_{\mathrm{Rd}}(F,H,\Xi):H
-\frac{\partial\widehat A_m}{\partial \Xi}(F,H,\Xi)\cdot\widehat K(F,H,\Xi).
\end{equation}

\begin{proposition}\label{dissipation conditionnelle}Let us be given an admissible deformation $\phi$ and an initial value $\Xi_0\in\mathcal{C}$. Then at all times $t\ge t_0$, we have
\begin{equation}\label{lc de diss}
D_{\textnormal{int}}(X,t)=\widehat D_{\textnormal{int}}(F(X,t),H(X,t),\Xi(X,t)).
\end{equation}
\end{proposition}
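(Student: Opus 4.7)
My plan is to combine the master inequality of Proposition \ref{ze master ineq} with the flexibility in the acceleration gradient provided by Proposition \ref{construction prudente}, and then redo the chain-rule computation underlying \eqref{derive de Helm}.

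First I would upgrade the ``vary $\Gamma$'' argument from the proof of Proposition \ref{carac de C}, which was done at $t_0$, to an arbitrary regular time. Fix an admissible $\phi$, an initial value $\Xi_0\in\mathcal{C}$, a point $X\in\Omega$ and a regular time $t\ge t_0$. For any target $\Gamma^*\in T_{F(X,t)}\SL(3)$, Proposition \ref{construction prudente} produces an admissible deformation $\phi^*$ that coincides with $\phi$ on $[t_0,t]$ and whose acceleration gradient at $X$ tends to $\Gamma^*$ as $s\to t^+$. Since $\phi^*$ agrees with $\phi$ up to time $t$ and $\Xi$ is driven by the ODE \eqref{KVM flow rule}, the corresponding internal variable $\Xi^*$ for $\phi^*$ satisfies $\Xi^*(X,t)=\Xi(X,t)$. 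Plugging $\phi^*$ into the master inequality \eqref{mas ineq} and letting $s\to t^+$ yields the same inequality as for $\phi$, but with $\Gamma(X,t)$ replaced by the arbitrary $\Gamma^*\in T_{F(X,t)}\SL(3)$. Taking $\Gamma^*=\lambda\,\frac{\partial\widehat A_m}{\partial H}(F(X,t),H(X,t),\Xi(X,t))$ and sending $\lambda\to\pm\infty$ forces
\[
\frac{\partial\widehat A_m}{\partial H}(F(X,t),H(X,t),\Xi(X,t))=0
\]
at every regular time $t\ge t_0$.

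Given this identity, \eqref{lc de diss} is a direct computation. Combining \eqref{def dint}, the chain rule expansion \eqref{derive de Helm}, the flow rule \eqref{KVM flow rule}, the decomposition \eqref{decomposition du stress} and the constitutive law \eqref{lc de PK1}, I would obtain
\begin{align*}
D_{\textnormal{int}}(X,t)&=-\frac{\partial\widehat A_m}{\partial F}:H-\frac{\partial\widehat A_m}{\partial H}:\Gamma-\frac{\partial\widehat A_m}{\partial \Xi}\cdot\widehat K\\
&\qquad+\Bigl(\widehat T_{\mathrm{Rd}}+\frac{\partial\widehat A_m}{\partial F}\Bigr):H-p\cof F:H,
\end{align*}
all evaluated at $(F(X,t),H(X,t),\Xi(X,t))$. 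The two $\partial\widehat A_m/\partial F$ contributions cancel, the pressure term vanishes because $H(X,t)\in T_{F(X,t)}\SL(3)$ (as in the proof of Proposition \ref{ze master ineq}), and the $\Gamma$-term vanishes by the preliminary step. What remains is exactly $\widehat T_{\mathrm{Rd}}:H-\partial\widehat A_m/\partial \Xi\cdot\widehat K=\widehat D_{\textnormal{int}}(F,H,\Xi)$, which is \eqref{lc de diss}.

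The delicate point is the first step. It would be incorrect to try to deduce the vanishing of $\partial\widehat A_m/\partial H$ at time $t$ by applying Proposition \ref{carac de C} to $\Xi(X,t)$, since $\Xi(X,t)$ need not lie in $\mathcal{C}$ (the set is explicitly not assumed flow-invariant). The argument is rescued by the fact that the perturbation $\phi^*$ of Proposition \ref{construction prudente} only modifies $\phi$ after time $t$, so the past history, and hence $\Xi(X,t)$ itself, is preserved, while the $\mathcal{C}$-conditional second law applies to the perturbed deformation with its original initial value $\Xi_0\in\mathcal{C}$. Non-regular times form a discrete subset of $[t_0,+\infty)$ and can be safely disregarded for the validity of \eqref{lc de diss}.
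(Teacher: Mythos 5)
Your proof follows the paper's own argument: perturb $\phi$ after a regular time $t$ via Proposition \ref{construction prudente}, note that $\Xi$ is unchanged up to time $t$ since the past history is preserved and $\Xi_0\in\mathcal{C}$ still applies, invoke the master inequality with arbitrary $\Gamma^*\in T_{F(X,t)}\SL(3)$, and let the scalar blow up to force $\partial\widehat A_m/\partial H=0$ along the trajectory, after which the chain rule gives \eqref{lc de diss}. You also correctly flag the crucial subtlety that one cannot simply invoke Proposition \ref{carac de C} at $\Xi(X,t)$ because $\mathcal{C}$ is not flow-invariant, which is exactly the point the paper is at pains to make. One small imprecision at the end: the proposition asserts \eqref{lc de diss} at \emph{all} $t\ge t_0$, so the non-regular times cannot simply be ``disregarded''; the paper closes the argument by noting that both sides of \eqref{lc de diss} are continuous in $t$, so the identity established on the dense set of regular times extends to all times.
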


Note that this is valid only conditionally, but this is sufficient for all further developments. 
\begin{proof}The proof is very similar to that of Proposition \ref{carac de C}. Let us fix a regular time $t_1\ge t_0$, and take an arbitrary $\Gamma^*\in T_{F(t_1)}\SL(3)$ (omitting $X$ for brevity). Let $\phi^*$ be given by Proposition \ref{construction prudente}. Using the same initial value $\Xi_0$ for the internal variable, we obtain an evolution $\Xi^*(t)$ of the internal variable that agrees with $\Xi(t)$ up to time $t_1$. Moreover, by the conditional second law hypothesis, the internal dissipation of the corresponding evolution $(\phi^*,\Xi^*)$ is nonnegative at all times.  

Consequently, inequality \eqref{mas ineq} implies that  
\begin{multline*}
\widehat T_{\mathrm{Rd}}(F^*(t),H^*(t),\Xi^*(t)):H^*(t)
-\frac{\partial\widehat A_m}{\partial H}(F^*(t),H^*(t),\Xi^*(t)):\Gamma^*(t)\\
-\frac{\partial\widehat A_m}{\partial \Xi}(F^*(t),H^*(t),\Xi^*(t))\cdot\widehat K(F^*(t),H^*(t),\Xi^*(t))\ge 0
\end{multline*}
for all $t>t_1$. Letting $t\to t_1^+$, it follows by continuity that
\begin{multline*}
\widehat T_{\mathrm{Rd}}(F(t_1),H(t_1),\Xi(t_1)):H(t_1)
-\frac{\partial\widehat A_m}{\partial H}(F(t_1),H(t_1),\Xi(t_1)):\Gamma^*\\
-\frac{\partial\widehat A_m}{\partial \Xi}(F(t_1),H(t_1),\Xi(t_1))\cdot\widehat K(F(t_1),H(t_1),\Xi(t_1))\ge 0.
\end{multline*}
Since $\Gamma^*$ in arbitrary in $T_{F(t_1)}\SL(3)$ and $\frac{\partial\widehat A_m}{\partial H}(F(t_1),H(t_1),\Xi(t_1))\in T_{F(t_1)}\SL(3)$, it follows that 
$$
\frac{\partial\widehat A_m}{\partial H}(F(t_1),H(t_1),\Xi(t_1))=0,
$$
so that equation \eqref{lc de diss} holds at such instants $t_1$. The proof is concluded by observing that both sides of this equation are continuous with respect to $t$.
\end{proof}

\begin{remark}Note that, as opposed to the usual Coleman-Noll procedure, this last relation does not imply that $\widehat A_m$ must not depend on $H$. Its partial gradient is shown to vanish only along trajectories of the evolution that keep the dissipation nonnegative. 

However, the restriction of $\widehat A_m$ to $T\SL(3)\times\mathcal{C}$ does not depend on $H$, since this is the case in $T\SL(3)\times\mathcal{C}_0$. 
In particular, if in addition $\mathcal{C}$ is invariant by the flow, then we recover the usual Coleman-Noll procedure by just restricting the internal variable to be $\mathcal{C}$-valued. Of course, the interesting case is when $\mathcal{C}$ is not invariant by the flow.

It is however probably more practical, and without loss of generality, to only consider constitutive laws $\widehat A_m$ that do not depend on $H$ as we do in the next section.
\end{remark}

\begin{remark}\label{reecriture de C0}
We have a more compact rewriting of definition \eqref{def de C0} as
\begin{equation}\label{def de C0 bis}\mathcal{C}_0=\Bigl\{\Xi_0\in\R^k;\!\widehat D_{\textnormal{int}}(F,H,\Xi_0)\ge 0,\!\frac{\partial\widehat A_m}{\partial H}(F,H,\Xi_0)=0,\text{for all}\,(F,H)\in T\!\SL(3)\Bigr\}.
\end{equation}
\end{remark}

\begin{remark}We have obtained necessary conditions for a constitutive model to satisfy the conditional second law. However,  there is no obvious way of determining the set $\mathcal{C}$ based on the constitutive laws in general.  This is again to be contrasted with the usual Coleman-Noll procedure which produces necessary and sufficient conditions on the constitutive laws for the satisfaction of the unconditional second law, see \cite{HLDAR1}. We will see that the necessary condition obtained above are also sufficient in the case of the Oldroyd B fluid in Section \ref{le cas ou ca marche}, but not of the Oldroyd A and Zaremba-Jaumann fluids in Section \ref{ZJ and co}. 
\end{remark}

\section{A Lagrangian formulation for the Oldroyd B model}\label{section lagrangienne}

By convention, in the sequel, when we write an equality between an  Eulerian quantity and a Lagrangian quantity, it will be meant at corresponding space-time points $(x,t)=(\phi(X,t),t)$. This  keeps the length of formulas under control. 

The original idea of \cite{GFOLP} was to use the standard generalized materials formalism, see \cite{Halphen}, to derive a Lagrangian model that would correspond to an Oldroyd B model in the Eulerian description,  satisfy the second law by construction and have a variational structure. This was not entirely successful, resulting instead in a quadratic version of the Oldroyd B model, see Section \ref{nonlinear oldB}, but not in the Oldroyd B model itself.

 We retain the starting idea of \cite{GFOLP} of using a nonlinearly  elastic neo-Hookean energy
\begin{equation*}\label{neohook}
\widehat W(F)=\frac\mu2\|F\|^2,\; \mu>0.
\end{equation*}
The incompressible neo-Hookean energy is commonly used to model rubbers, and phenomenologically accounts here for the presence of the polymer particules in the suspension.
 This energy  is  frame-indifferent and can be rewritten as $\widetilde W(C)=\frac\mu2\tr C$, where $C=F^TF$ is the usual strain or Cauchy-Green tensor.
We use this nonlinearly elastic stored energy function to define the Lagrangian constitutive law $\widehat A_m$ of the Helmholtz free energy specific density of our model by
\begin{equation}\label{KVM ansatz}
\widehat A_m(F,\Xi{})=\widetilde W(\Xi\, C)=\frac\mu2\Xi{}:C,
\end{equation}
where $\Xi{}$ is a symmetric matrix-valued, dimensionless internal variable,  \emph{i.e.,} $\R^k\approx \Sym_3$ in the general formalism of Section \ref{cadre general}. Then, 
\begin{equation}\label{derivees KVM ansatz}
\frac{\partial\widehat A_m}{\partial F}(F,\Xi{})=\mu \Bigl(F\,\Xi-\frac{\tr \Xi}{\tr C^{-1}}\cof F\Bigr)\text{ and }\frac{\partial\widehat A_m}{\partial \Xi{}}(F,\Xi{})=\frac\mu2C.
\end{equation}
In the above formula, $\frac{\partial\widehat A_m}{\partial F}(F,\Xi{})$ is the orthogonal projection of the unconstrained partial gradient $\mu F\,\Xi$ of $\widehat A_m$ onto the tangent space $T_{F}\SL(3)$ as required by the remarks at the end of Section \ref{notation}.

This form of free energy is inspired by more general choices of internal variables  and elastic energies $\widehat W$, see \cite{HLDAR1} for details. Note that  $\Xi{}$ is symmetric but $\Xi\,C$ is not. The modeling role of $\Xi$ is to account for the interaction between the global deformation of the fluid and its elastic polymer component. Finally, this free energy does not depend on $H$ by assumption.

Let us stress that our choice of internal variable $\Xi$ is very close to what several authors in complex fluid modeling call conformation or configuration tensors, see \cite{Hulsen} for instance, but that it should not be conceived of as some kind of right Cauchy-Green or Finger tensor of some intermediate deformation, which is not assumed to be relevant here. 

We take here the simplest kinematically viscous stress possible, which is that of the Newtonian fluid with viscosity $\eta_s$ in the Lagrangian description, 
\begin{equation}\label{newtonian}
\widehat T_{\mathrm{Rd}}(F,H,\Xi{})=2\eta_s\Sym(HF^{-1})F^{-T}.
\end{equation}
It does not depend on $\Xi$ and will take care of the solvent stress. 

Using  \eqref{derivees KVM ansatz}, the expression \eqref{lc de PK1} of the first Piolà-Kirchhoff stress in terms of its (partial) constitutive law and the natural decomposition  \eqref{decomposition du stress} of the said constitutive law, translate as a natural decomposition of  the Cauchy stress $\sigma=\sigma_s+\sigma_p- p'I$ into a Newtonian viscous stress, the solvent part
\begin{equation}\label{newtonian euler}
\sigma_s=T_{\mathrm{Rd}}F^T=2\eta_s d,
\end{equation}
 a second part given by a constitutive law
\begin{equation}\label{polymer stress}
 \sigma_p=\mu F\Xi{}F^T,
 \end{equation}
 which will turn out to exactly correspond  to the polymer part
 in decomposition~\eqref{les deux stress},
 and finally a pressure term $-p' I$ with 
 $$p'=p+\mu\frac{\tr\Xi}{\tr C^{-1}},
 $$
where $p'$ is another indeterminate pressure,\footnote{We will ignore the difference between $p'$ and $p$ in the sequel.} which is again not given by a constitutive law.

Let us define $\Sigma_p=F^{-1}\sigma_p F^{-T}$ to serve as the second Piolà-Kirchhoff stress part corresponding to $\sigma_p$. Clearly, 
\begin{equation}\label{Z simple}
\Sigma_p=\mu \Xi{}.
\end{equation}
The numerical value of the modulus $\mu$ will become irrelevant, but we keep it for reasons of dimensional homogeneity. 

To complete the identification of the Oldroyd B fluid as a visco-elastic material with an internal variable, we just need to specify the flow rule. 
We thus choose
\begin{equation}\label{la bonne flow rule}
\widehat K(F,H,\Xi{})=-\frac1{\lambda_1}\Xi{}+\frac{2\eta_p}{\mu\lambda_1}F^{-1}\Sym(HF^{-1})F^{-T}.
\end{equation}
In other words, the ordinary differential equation \eqref{KVM flow rule} for the internal variable here assumes the form
\begin{equation}\label{puisqu'il faut insister...}
\frac{\partial \Xi{}}{\partial t}=-\frac1{\lambda_1}\Xi{}+\frac{2\eta_p}{\mu\lambda_1}F^{-1}\Sym(HF^{-1})F^{-T}.
\end{equation}

It is easy to check that with these choices, the resulting incompressible material is frame-indifferent. It clearly has a symmetric Cauchy stress constitutive law. It can also be directly shown in this Lagrangian description that the corresponding material is fluid, see \cite{HLDAR1}. This is however not really necessary since,

\begin{proposition}\label{jackpot!}
The visco-elastic material defined by \eqref{KVM ansatz}, \eqref{newtonian}, and \eqref{la bonne flow rule} is the Oldroyd B fluid with material constants $\lambda_1$, $\lambda_2$ and $\eta$.
\end{proposition}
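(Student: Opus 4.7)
The plan is simply to verify the two components of the additive decomposition \eqref{les deux stress}. First, I would translate the Lagrangian viscous stress \eqref{newtonian} into Eulerian form, using $\sigma_s = T_{\mathrm{Rd}} F^T$ together with $H F^{-1} = h$ at corresponding space-time points. Since $\Sym(HF^{-1}) = \Sym(h) = d$, one obtains immediately $\sigma_s = 2\eta_s d$, which is the Newtonian solvent stress with solvent viscosity $\eta_s$.

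Next I would treat the polymer part. The constitutive choice gives $\sigma_p = \mu F \Xi F^T$ and the corresponding second Piolà--Kirchhoff stress is $\Sigma_p = F^{-1} \sigma_p F^{-T} = \mu \Xi$ by \eqref{Z simple} (using $\cof F = F^{-T}$ since $\det F = 1$). Proposition \ref{et voila OB} then yields
\begin{equation*}
\overset{\triangledown}{\sigma}_p = F \frac{\partial \Sigma_p}{\partial t} F^T = \mu F \frac{\partial \Xi}{\partial t} F^T.
\end{equation*}
Substituting the flow rule \eqref{puisqu'il faut insister...} and multiplying on the left by $\mu F$ and on the right by $F^T$,
\begin{equation*}
\mu F \frac{\partial \Xi}{\partial t} F^T = -\frac{\mu}{\lambda_1} F \Xi F^T + \frac{2\eta_p}{\lambda_1} \Sym(HF^{-1}) = -\frac{1}{\lambda_1} \sigma_p + \frac{2\eta_p}{\lambda_1} d.
\end{equation*}
Rearranging gives $\sigma_p + \lambda_1 \overset{\triangledown}{\sigma}_p = 2\eta_p d$, which is exactly the second equation in \eqref{les deux stress}.

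Finally, I would invoke the equivalence between \eqref{les deux stress} and \eqref{le modele} recalled after \eqref{les deux stress}: setting $\eta = \eta_s + \eta_p$ and $\lambda_2 = \lambda_1 \eta_s/(\eta_s + \eta_p)$ identifies the resulting model as the Oldroyd B fluid with parameters $\lambda_1$, $\lambda_2$, $\eta$. The indeterminate pressure $-p'I$ from \eqref{polymer stress} and the surrounding discussion is simply absorbed into the pressure of the Oldroyd B formulation.

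There is no real obstacle here: the argument is essentially a direct computation, and the only nontrivial input is Proposition \ref{et voila OB}, which converts the time derivative of $\Sigma_p$ into the upper convected derivative of $\sigma_p$. The careful choice of the coefficient $\frac{2\eta_p}{\mu \lambda_1}$ in the flow rule \eqref{la bonne flow rule} is precisely what makes the algebra collapse into the Oldroyd B relaxation equation with the correct polymer viscosity.
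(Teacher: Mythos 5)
Your proof is correct and follows essentially the same route as the paper's: verify $\sigma_s = 2\eta_s d$ from the Newtonian viscous stress, use Proposition \ref{et voila OB} together with $\Sigma_p = \mu\Xi$ to convert $\partial_t\Xi$ into $\overset{\triangledown}{\sigma}_p$, substitute the flow rule, and invoke the equivalence between the decomposed and original Oldroyd B equations. Nothing to add.
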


\begin{proof}
We have already seen that $\sigma_s=2\eta_s d$ by \eqref{newtonian euler}. Besides, 
$\sigma_p$ is given by equation \eqref{polymer stress}. Proposition \ref{et voila OB} applied to 
\eqref{polymer stress} and \eqref{Z simple}
then shows that
\begin{equation}\label{derivee Old B du stress p}
\overset{\triangledown}{\sigma}_p=\mu F\frac{\partial \Xi{}}{\partial t}F^T.
\end{equation}
We now substitute the ordinary differential equation \eqref{puisqu'il faut insister...} in the above relation, and obtain
\begin{align*}
 \overset{\triangledown}{\sigma}_p&=\mu F\Bigl(-\frac1{\lambda_1}\Xi{}+\frac{2\eta_p}{\mu\lambda_1}F^{-1}\Sym(HF^{-1})F^{-T}\Bigr)F^T\\
 &=-\frac{\mu}{\lambda_1} F\Xi{}F^T+\frac{2\eta_p}{\lambda_1}\Sym(HF^{-1})\\
 &=-\frac{1}{\lambda_1} \sigma_p+\frac{2\eta_p}{\lambda_1}d,
\end{align*}
or in other words
\begin{equation}\label{flowruleeulerien}
\sigma_p+\lambda_1\overset{\triangledown}{\sigma}_p=2\eta_p d,
\end{equation}
which is the polymer part of the Oldroyd B fluid constitutive differential equation. As already seen in Section \ref{section OB}, this is equivalent to $$\sigma+\lambda_1\overset{\triangledown}{\sigma}=2\eta(d+\lambda_2\overset{\triangledown}{d}),$$
with $\sigma=\sigma_s+\sigma_p$, $\eta=\eta_s+\eta_p$ and  $\lambda_2=\frac{\lambda_1\eta_s}{\eta_s+\eta_p}$. This is the Oldroyd B fluid model with global viscosity $\eta$ and relaxation times $\lambda_1$ and $\lambda_2$.
\end{proof}

\begin{remarks}We have shown that the specific instance of visco-elastic materials with internal variables described above satisfies the Oldroyd B equation in the Eulerian description. Conversely, given any Oldroyd B fluid, we can manufacture such a material that reproduces its behavior. Indeed, it suffices to take $\eta_s=\frac{\lambda_2}{\lambda_1}\eta$,  $\eta_p=\bigl(1-\frac{\lambda_2}{\lambda_1}\bigr)\eta$, and any nonzero value for $\mu$ in the Lagrangian model. 

It is a posteriori interesting that the thermodynamically motivated decomposition of the first Piolà-Kirchhoff stress \eqref{decomposition du stress} actually corresponds to the Cauchy stress decomposition for the Oldroyd B fluid \eqref{les deux stress}, which initially looked like little more than an algebraic trick.
\end{remarks}

In \cite{HLDAR1}, we rephrased the standard Oldroyd B model as a thermodynamics model with an internal variable expressed in Eulerian form. We took $\xi=\sigma_p$ for the internal variable, a free energy $\widehat a_m$ only function of $\xi$, and used the differential constitutive law \eqref{flowruleeulerien} itself as a flow rule. A special case of it turns out to be equivalent to the present, more physically grounded, Lagrangian approach. Indeed, by \eqref{polymer stress}
$
\xi=\mu F\Xi{}F^T$ or again $\Xi{}=\frac1\mu F^{-1}\xi F^{-T}$.
As said above in terms of Eulerian flow rule, we simply took
as ordinary differential equation for $\xi$
\begin{equation}\label{flow rule OB 1}
\dot{\xi}=h\xi+\xi h^T+\frac1{\lambda_1}\bigr({-}\xi+2\eta_pd\bigl),
\end{equation}
a mere rewriting of \eqref{flowruleeulerien}. We have already noted it to be equivalent to \eqref{puisqu'il faut insister...}.

\begin{proposition} The Eulerian expression of the free energy \eqref{KVM ansatz} is given by  
$\widehat a_m(\xi)=\frac12\tr\xi$.
\end{proposition}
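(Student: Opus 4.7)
The proof should be a direct change of variables, using the correspondence between Lagrangian and Eulerian quantities already established in the paper. The plan is to transport the constitutive law \eqref{KVM ansatz} through the identification $\xi = \mu F\Xi F^T$, equivalently $\Xi = \frac{1}{\mu}F^{-1}\xi F^{-T}$ coming from \eqref{polymer stress}, and simplify using the cyclic property of the trace.

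More concretely, I would first rewrite the Frobenius inner product appearing in \eqref{KVM ansatz} as a trace, using the fact that $\Xi$ is symmetric: $\widehat A_m(F,\Xi) = \frac{\mu}{2}\Xi : C = \frac{\mu}{2}\tr(\Xi C)$, with $C = F^TF$. Next, I would substitute $\Xi = \frac{1}{\mu}F^{-1}\xi F^{-T}$ into this expression, obtaining
\begin{equation*}
\widehat A_m = \frac{1}{2}\tr\bigl(F^{-1}\xi F^{-T} F^T F\bigr) = \frac{1}{2}\tr\bigl(F^{-1}\xi F\bigr).
\end{equation*}
Finally, cyclic permutation under the trace gives $\tr(F^{-1}\xi F) = \tr(\xi F F^{-1}) = \tr \xi$, whence $\widehat a_m(\xi) = \frac{1}{2}\tr\xi$ at corresponding space-time points, as claimed.

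There is no real obstacle here; the entire content is the elementary algebraic identity above, together with the convention, spelled out just before the proposition, that an Eulerian/Lagrangian equality is meant at corresponding points $(x,t)=(\phi(X,t),t)$. The only mildly delicate point is noticing that the $\mu$ factor cancels out, confirming the remark following Proposition \ref{jackpot!} that the numerical value of $\mu$ is irrelevant in the Eulerian description. One could also double-check consistency by verifying that the Eulerian flow rule \eqref{flow rule OB 1} together with this $\widehat a_m$ reproduces the right polymer stress $\sigma_p = \mu F\Xi F^T$ via the Eulerian version of the Coleman-Noll analysis of \cite{HLDAR1}, but this is not needed for the statement itself.
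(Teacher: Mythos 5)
Your proof is correct and is essentially the paper's own argument: both rest on the identification $\xi=\mu F\Xi F^T$ and the cyclic property of the trace, the only cosmetic difference being that you substitute $\Xi=\frac1\mu F^{-1}\xi F^{-T}$ into $\frac\mu2\tr(\Xi C)$ while the paper reads off $\frac\mu2\tr(\Xi F^TF)=\frac12\tr(\mu F\Xi F^T)=\frac12\tr\xi$ directly. No gap.
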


\begin{proof} We have
$$\widehat A_m(F,\Xi{})=\frac\mu2 \Xi{}:C=\frac\mu2\tr(\Xi{}F^TF)=\frac12\tr(\mu F\Xi{}F^T)=\frac12\tr\xi,$$
hence the Eulerian form of the free energy.
\end{proof}

With this specific choice of Eulerian free energy and internal variable, we are thus recovering our Lagrangian model expressed in Eulerian terms, the flow rule of which is basically just the Oldroyd B differential constitutive law. The advantages of the Lagrangian formulation will appear below in dealing with the second law of thermodynamics.

\section{The Oldroyd B fluid and the second law of thermodynamics}\label{le cas ou ca marche}

As already outlined in the Introduction, the compatibility of the Oldroyd B fluid model with thermodynamics is not a trivial issue. For instance, when considered as a model with no internal variable, then the internal dissipation must be the naive one $d_{\mathrm{int,naive}}=\sigma:d$, here expressed in Eulerian terms, see \cite{HLDAR1}. There we showed numerically that $d_{\mathrm{int,naive}}$ can become strictly negative after some time, even if initially strictly positive.

Let us thus develop the thermodynamics of the Lagrangian formulation of the Oldroyd B model presented in Section \ref{section lagrangienne}.
\begin{proposition}\label{scoop!}
The constitutive law for the Lagrangian internal dissipation is
\begin{equation}\label{dissipation Old B}
  \widehat D_{\mathrm{int}}(F,H,\Xi{})=2\eta_s\|\Sym(HF^{-1})\|^2+\frac{\mu}{2\lambda_1}\Xi{}:C.
\end{equation}
\end{proposition}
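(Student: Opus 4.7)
The plan is to substitute directly the three constitutive laws \eqref{newtonian}, \eqref{derivees KVM ansatz} and \eqref{la bonne flow rule} into the general formula \eqref{def de lc de diss} for $\widehat D_{\mathrm{int}}$, and simplify using the symmetry of $\Sym(HF^{-1})$ and the incompressibility constraint $\cof F : H = 0$. Since $\det F = 1$ we have $\cof F = F^{-T}$, so incompressibility reads $\tr(HF^{-1}) = 0$, a fact that will absorb all potential cross terms.

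First I would compute the kinematically viscous stress-power term. Using the Frobenius identity $A:B=\tr(A^TB)$ together with cyclic permutation,
\begin{equation*}
\widehat T_{\mathrm{Rd}}(F,H,\Xi):H = 2\eta_s\,\tr\bigl(F^{-1}\Sym(HF^{-1})\,H\bigr) = 2\eta_s\,\Sym(HF^{-1}):(HF^{-1}).
\end{equation*}
Since $\Sym_3$ is orthogonal to $\Skew_3$ for the Frobenius product, this reduces to $2\eta_s\|\Sym(HF^{-1})\|^2$, which is the first summand on the right-hand side of \eqref{dissipation Old B}.

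Second I would compute $\frac{\partial \widehat A_m}{\partial \Xi}(F,\Xi)\cdot \widehat K(F,H,\Xi)=\frac\mu2 C:\widehat K(F,H,\Xi)$. Splitting the flow rule into its two parts produces two contributions: the ``relaxation'' piece $-\frac{1}{\lambda_1}\Xi$ gives $-\frac{\mu}{2\lambda_1}\,C:\Xi$, which upon sign reversal yields exactly the second summand in \eqref{dissipation Old B}. The ``forcing'' piece $\frac{2\eta_p}{\mu\lambda_1}F^{-1}\Sym(HF^{-1})F^{-T}$ contributes (up to a constant) the quantity
\begin{equation*}
C:F^{-1}\Sym(HF^{-1})F^{-T} = \tr\bigl(F^TF\,F^{-1}\Sym(HF^{-1})F^{-T}\bigr) = \tr\bigl(\Sym(HF^{-1})\bigr) = \tr(HF^{-1}),
\end{equation*}
after a cyclic permutation, which vanishes by incompressibility. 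This is the only nontrivial point of the computation; I expect it to be the main (though mild) obstacle, and it is precisely what makes the $2\eta_p$ term in the flow rule invisible to the dissipation. Collecting both contributions yields \eqref{dissipation Old B}.
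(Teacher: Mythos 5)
Your proposal is correct and follows essentially the same route as the paper: compute the viscous stress power via cyclic permutation to get $2\eta_s\|\Sym(HF^{-1})\|^2$, then split the flow-rule contribution into the relaxation part (which gives $\frac{\mu}{2\lambda_1}\Xi:C$) and the forcing part (which vanishes because $C:F^{-1}\Sym(HF^{-1})F^{-T}=\tr(\Sym(HF^{-1}))=\tr d=0$ by incompressibility). The only detail the paper states and you leave implicit is the preliminary observation that since $\widehat A_m$ here does not depend on $H$, the formula \eqref{def de lc de diss} for $\widehat D_{\mathrm{int}}$ applies directly by the chain rule, without invoking the conditional machinery of Proposition \ref{dissipation conditionnelle}; worth a sentence, but the substance of your argument is the paper's.
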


\begin{proof}
 Since $\frac{\partial\widehat A_m}{\partial H}=0$ by assumption, we do not need the full force of Proposition \ref{dissipation conditionnelle}. The chain rule directly shows that formulas  \eqref{def de lc de diss} and 
\eqref{lc de diss} hold for all $(F,H)\in T\SL(3)$ and $\Xi\in \Sym_3$.

Consider first the Newtonian fluid term,
\begin{align*}
 \widehat T_{\mathrm{Rd}}(F,H,\Xi{}):H&=2\eta_s\bigl(\Sym(HF^{-1})F^{-T}\bigr):H
 =2\eta_s\tr\bigl(F^{-1}\Sym(HF^{-1})H\bigr)\\
 &=2\eta_s\tr\bigl(\Sym(HF^{-1})HF^{-1}\bigr)=2\eta_s\Sym(HF^{-1}):\bigl(HF^{-1}\bigr)\\
 &=2\eta_s\|\Sym(HF^{-1})\|^2,
\end{align*}
which is to be expected from the corresponding Eulerian expression.
Next, we look at the dissipation coming from the internal variable. It follows from \eqref{derivees KVM ansatz} that 
\begin{align*}
-\frac{\partial\widehat A_m}{\partial \Xi{}}(F,\Xi{}): \widehat K(F,H,\Xi{})&=-\frac\mu2 C:\Bigl(-\frac1{\lambda_1}\Xi{}+\frac{2\eta_p}{\mu\lambda_1}F^{-1}\Sym(HF^{-1})F^{-T}\Bigr)\\
&=\frac\mu{2\lambda_1} C:\Xi{}-\frac{\eta_p}{\lambda_1}L(H),
\end{align*}
where $L(H)=C:{}\bigr(F^{-1}\Sym(HF^{-1})F^{-T}\bigl)$. This linear form vanishes since
\begin{multline*}
L(H)=\tr\bigr(CF^{-1}\Sym(HF^{-1})F^{-T}\bigl)\\
=\tr\bigr(F^{T}\Sym(HF^{-1})F^{-T}\bigl)
=\tr\bigr(\Sym(HF^{-1})\bigl)=0.
\end{multline*}
Indeed, $\Sym(HF^{-1})=d$ and $\tr d=0$ by incompressibility. This completes the proof of relation \eqref{dissipation Old B}.
\end{proof}

On the Eulerian side, see \cite{HLDAR1}, we had $\dot\xi=\widehat k(h,\xi)$ with 
$
\widehat k(h,\xi)=h\xi+\xi h^T+\frac1{\lambda_1}\bigr({-}\xi+2\eta_pd\bigl),
$
\emph{i.e.}, \eqref{flow rule OB 1}, and the internal dissipation
\begin{equation*}\label{un clausius-planck de plus}
\widehat d_{\mathrm{int}}(h,\xi)=\sigma:d-\frac{\partial \widehat a_m}{\partial\xi}(\xi):\widehat k(h,\xi)\text{ with }\sigma=2\eta_sd+\xi.
\end{equation*}
Let us apply this here with $\widehat a_m(\xi)=\frac12\tr\xi$.

\begin{proposition}\label{pareil en eulerien}
 The Eulerian constitutive law for the internal dissipation is given by
 \begin{equation}\label{dissipation eulerienne}
  \widehat d_{\mathrm{int}}(h,\xi)=2\eta_s\|d\|^2+\frac1{2\lambda_1}\tr \xi.
 \end{equation}
\end{proposition}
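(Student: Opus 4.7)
The plan is to substitute directly into the general Eulerian dissipation formula $\widehat d_{\mathrm{int}}(h,\xi)=\sigma:d-\frac{\partial \widehat a_m}{\partial\xi}(\xi):\widehat k(h,\xi)$ recalled just before the statement, using $\sigma=2\eta_sd+\xi$, the specific free energy $\widehat a_m(\xi)=\frac12\tr\xi$ just established, and the flow rule $\widehat k(h,\xi)=h\xi+\xi h^T+\frac1{\lambda_1}(-\xi+2\eta_pd)$ from \eqref{flow rule OB 1}. The computation then splits naturally into a Newtonian part and a coupling-bookkeeping part.

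First I would expand the stress power as $\sigma:d=2\eta_s\|d\|^2+\xi:d$, which already produces the viscous dissipation term appearing in \eqref{dissipation eulerienne}, together with a spurious-looking coupling term $\xi:d$. Next, since $\tr\xi=I:\xi$, the free-energy derivative is $\frac{\partial \widehat a_m}{\partial\xi}(\xi)=\frac12 I$, so that the second contribution reduces to $\frac12\tr\widehat k(h,\xi)$. Taking the trace of the three summands of $\widehat k$ yields $\frac12\tr(h\xi+\xi h^T)$, $-\frac1{2\lambda_1}\tr\xi$, and $\frac{\eta_p}{\lambda_1}\tr d$. The last term vanishes by incompressibility, while the symmetry of $\xi$ turns the first one into $\tr(h\xi)=\xi:\Sym(h)=\xi:d$, using once more the symmetry of $\xi$ to kill the skew part of $h$.

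Assembling the pieces, the two occurrences of $\xi:d$ cancel exactly, one coming from the stress power and one from the upper-convected transport part of the flow rule, and only the two announced terms survive. I do not expect any real obstacle here: the only delicate point is this cancellation, which hinges on the specific choice $\widehat a_m(\xi)=\frac12\tr\xi$ together with the upper-convected structure of $\widehat k$, and would fail for a generic flow rule. As a consistency check, \eqref{dissipation eulerienne} matches the pushforward of the Lagrangian expression of Proposition~\ref{scoop!}, since $\Sym(HF^{-1})=d$ and $\frac{\mu}{2\lambda_1}\Xi:C=\frac1{2\lambda_1}\tr(\mu F\Xi F^T)=\frac1{2\lambda_1}\tr\xi$ by \eqref{polymer stress}.
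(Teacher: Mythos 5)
Your proof is correct and takes essentially the same route as the paper: both compute $\frac{\partial \widehat a_m}{\partial\xi}=\frac12 I$, reduce the second contribution to $\frac12\tr\widehat k(h,\xi)$, and use incompressibility together with the symmetry of $\xi$ to extract $\tr(h\xi+\xi h^T)=2\xi:d$, which cancels the $\xi:d$ from the stress power. The concluding consistency check against the pushforward of Proposition~\ref{scoop!} is a nice addition but not part of the paper's argument.
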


\begin{proof}
 We have here $\frac{\partial \widehat a_m}{\partial\xi}(\xi)=\frac12I$, so that
 $$
 \widehat d_{\mathrm{int}}(h,\xi)=2\eta_sd:d+\xi:d-\frac12\tr(\widehat k(h,\xi)),
 $$
 with 
 $$
 \tr(\widehat k(h,\xi))=2\xi:h-\frac1{\lambda_1}\tr \xi=2\xi:d-\frac1{\lambda_1}\tr \xi,
 $$
 by  incompressibility and the symmetry of $\xi$. Therefore, \eqref{dissipation eulerienne} holds true.
\end{proof}

\begin{remark}
Note that, even though the internal dissipation is not written as a specific density, nonetheless at all corresponding values of the thermodynamic variables, $
\widehat D_{\mathrm{int}}(F,H,\Xi{}) =\widehat d_{\mathrm{int}}(h,\xi)
 $ because of incompressibility, since both refer to the same model. In particular, we have already noticed that $\tr\xi=\tr\sigma_p=\mu \Xi{}:C$. Moreover, since $\tr\sigma_s=0$ and $\sigma=\sigma_s+\xi$, we can also write $\widehat d_{\mathrm{int}}(h,\xi)=2\eta_s\|d\|^2+\frac1{2\lambda_1}\tr\sigma$.
 \end{remark}

Our goal now is to show that our Lagrangian formulation of the Oldroyd B model satisfies the second law conditionally with a condition set $\mathcal{C}$ that we will entirely identify.
We need the following standard lemma.

\begin{lemma}\label{une inegalite matricielle}
Let $B$ and $C$ be two symmetric positive semi-definite $n\times n$ matrices. Then we have
\begin{equation}\label{ladite inegalite}
B:C\ge n(\det B)^{\frac1n}(\det C)^{\frac1n}.
\end{equation}
In particular, $B:C\ge0$.
\end{lemma}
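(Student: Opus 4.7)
The plan is to reduce the inequality to the classical AM-GM inequality applied to eigenvalues of an auxiliary positive semi-definite matrix. First I would recall that a symmetric positive semi-definite matrix $B$ admits a unique symmetric positive semi-definite square root $B^{1/2}$, and that for symmetric $B$ one has $B:C=\tr(BC)=\tr(B^{1/2}CB^{1/2})$, using the cyclicity of the trace. The matrix $M=B^{1/2}CB^{1/2}$ is clearly symmetric, and it is positive semi-definite because for any $x\in\R^n$, $x^TMx=(B^{1/2}x)^TC(B^{1/2}x)\ge 0$ by positive semi-definiteness of $C$.

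Next I would apply the arithmetic-geometric mean inequality to the (nonnegative) eigenvalues $\mu_1,\dots,\mu_n$ of $M$, yielding
\begin{equation*}
\frac{\tr M}{n}=\frac{\mu_1+\cdots+\mu_n}{n}\ge(\mu_1\cdots\mu_n)^{1/n}=(\det M)^{1/n}.
\end{equation*}
By multiplicativity of the determinant, $\det M=\det(B^{1/2})\det C\det(B^{1/2})=\det B\det C$, so combining the two displays gives
\begin{equation*}
B:C=\tr M\ge n(\det B)^{1/n}(\det C)^{1/n},
\end{equation*}
which is exactly \eqref{ladite inegalite}. Since $B$ and $C$ are positive semi-definite, both $\det B$ and $\det C$ are nonnegative, so the right-hand side is nonnegative, giving $B:C\ge 0$ as a direct consequence.

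The only small subtlety is that $B^{1/2}$ exists for positive semi-definite (not just positive definite) $B$; this is standard from the spectral theorem, since one simply takes the nonnegative square root of each eigenvalue in an orthonormal diagonalization of $B$. There is no real obstacle here: the argument is essentially AM-GM plus the observation that conjugation by $B^{1/2}$ preserves positive semi-definiteness and converts the inner product $B:C$ into the trace of a psd matrix whose determinant is $\det B\det C$.
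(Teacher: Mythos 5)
Your proof is correct, but it follows a genuinely different route from the paper's. The paper rewrites $B:C=\tr(C^{1/2}B^{1/2}B^{1/2}C^{1/2})=\|B^{1/2}C^{1/2}\|^2$, then applies AM--GM to the squared column norms of the (generically non-symmetric) matrix $B^{1/2}C^{1/2}$ and invokes Hadamard's inequality $\prod_i\|M_i\|\ge|\det M|$ to relate the product of column norms to the determinant. You instead conjugate $C$ by $B^{1/2}$ to form the \emph{symmetric} positive semi-definite matrix $M=B^{1/2}CB^{1/2}$, whose trace equals $B:C$ and whose determinant factors as $\det B\det C$; you then apply AM--GM directly to the eigenvalues of $M$, with no need for Hadamard. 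Your argument is a bit more economical---one classical inequality instead of two, and only one square root needed---at the cost of a less symmetric treatment of $B$ and $C$ (you square-root $B$ but not $C$, which is harmless since the roles could be swapped). The paper's version is more visibly symmetric in $B$ and $C$ and makes the identity $B:C=\|B^{1/2}C^{1/2}\|^2$ explicit, which may have some independent appeal, but both proofs are equally valid and essentially the same length.
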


\begin{proof}
We first remark that
$$
B:C=\tr(BC)=\tr(B^{\frac12}B^{\frac12}C^{\frac12}C^{\frac12})=\tr(C^{\frac12}B^{\frac12}B^{\frac12}C^{\frac12})=\|B^{\frac12}C^{\frac12}\|^2.
$$
 Let $M=B^{\frac12}C^{\frac12}$ and $M_i\in\R^n$ be its column vectors. We have
$$B:C=\sum_{i=1}^n\|M_i\|^2\ge n\Bigl(\prod_{i=1}^n\|M_i\|^2\Bigr)^{\frac1n}\ge n(\det M)^{\frac2n}= n(\det B)^{\frac1n}(\det C)^{\frac1n},$$
by the inequality between the arithmetic and geometric means and by the Hadamard inequality.
\end{proof}
Note that inequality \eqref{ladite inegalite} is sharp. We proceed to identify the set $\mathcal{C}_0$.

\begin{proposition}\label{condition C}
 We have $\mathcal{C}_0=\overline{\Sym_3^+}$.
\end{proposition}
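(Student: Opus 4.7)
The plan is to reduce the defining conditions of $\mathcal{C}_0$ to a single matrix inequality and then apply Lemma \ref{une inegalite matricielle} in one direction and a direct construction in the other.

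First I would observe that the free energy \eqref{KVM ansatz} does not depend on $H$, so the condition $\frac{\partial \widehat A_m}{\partial H}(F,H,\Xi_0) = 0$ is automatic. Using the explicit formula \eqref{dissipation Old B} for $\widehat D_{\mathrm{int}}$, the membership of $\Xi_0$ in $\mathcal{C}_0$ thus amounts to
\begin{equation*}
2\eta_s\|\Sym(HF^{-1})\|^2 + \frac{\mu}{2\lambda_1}\,\Xi_0 : C \ge 0
\end{equation*}
for every $(F,H)\in T\SL(3)$. Since the first term is nonnegative and can be made to vanish (take $H=0$), and since every $C\in \Sym_3^+$ with $\det C = 1$ arises as $F^TF$ for some $F\in\SL(3)$ (e.g.\ $F = C^{1/2}$), the condition collapses to the scalar inequality
\begin{equation*}
\Xi_0 : C \ge 0 \quad \text{for all } C \in \Sym_3^+ \text{ with } \det C = 1.
\end{equation*}

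For the inclusion $\overline{\Sym_3^+}\subset \mathcal{C}_0$, I would directly invoke Lemma \ref{une inegalite matricielle} with $B=\Xi_0\in\overline{\Sym_3^+}$ to conclude $\Xi_0:C \ge 3(\det\Xi_0)^{1/3}(\det C)^{1/3} \ge 0$.

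For the reverse inclusion $\mathcal{C}_0 \subset \overline{\Sym_3^+}$, I would argue by contraposition: suppose $\Xi_0\in\Sym_3$ has a strictly negative eigenvalue $\mu_1<0$ with corresponding unit eigenvector $e_1$, completed to an orthonormal eigenbasis $(e_1,e_2,e_3)$ with eigenvalues $\mu_1,\mu_2,\mu_3$. In that basis I would take the diagonal test matrix $C = \diag(t, t^{-1/2}, t^{-1/2})$ for $t>0$, which lies in $\Sym_3^+$ and has determinant $1$. A direct computation gives $\Xi_0 : C = \mu_1 t + (\mu_2+\mu_3)t^{-1/2}$, which tends to $-\infty$ as $t\to +\infty$; in particular it becomes negative for $t$ large, contradicting $\Xi_0\in\mathcal{C}_0$. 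This is the whole argument; the only mildly delicate point is choosing a one-parameter family of unit-determinant positive definite matrices that isolates the negative eigendirection, which the ansatz above handles.
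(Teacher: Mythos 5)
Your proof is correct and follows essentially the same strategy as the paper: reduce membership in $\mathcal{C}_0$ to the scalar inequality $\Xi_0:C\ge 0$ over $\Sym_3^+\cap\SL(3)$, then use Lemma \ref{une inegalite matricielle} for one inclusion and a diagonalization argument for the other. The only small difference is in the reverse inclusion: the paper first extends the inequality to all of $\overline{\Sym_3^+}$ by scaling and continuity and then tests against the rank-one matrix $\diag(1,0,0)$, whereas you stay inside $\Sym_3^+\cap\SL(3)$ with the one-parameter family $\diag(t,t^{-1/2},t^{-1/2})$ and let $t\to\infty$; both are valid and equally direct.
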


\begin{proof}
It follows from \eqref{def de C0 bis} and Proposition \ref{scoop!} that
\begin{multline*}\label{C0 Old B}\mathcal{C}_0=\Bigl\{\Xi_0\in\R^k;2\eta_s\|\Sym(HF^{-1})\|^2+\frac{\mu}{2\lambda_1}\Xi_0:C\ge 0\\
\text{ for all }(F,H)\in T\SL(3)\Bigr\},
\end{multline*}
Let $\Xi_0\in\mathcal{C}_0$.
Taking $H=0$, it follows that
$$
\Xi_0:C\ge 0,
$$
for all $C\in\Sym_3^+\cap\SL(3)$.
Multiplying by any positive factor and by continuity, we also have $\Xi_0:C\ge 0$ for all $C\in\overline{\Sym_3^+}$. Diagonalizing $\Xi_0$ in the form $\Xi_0=Q\Delta Q^T$ with $Q\in SO(3)$ and $\Delta$ diagonal, we see that $\Delta:C\ge 0$ for all $C\in\overline{\Sym_3^+}$. Choosing $C=\diag(1,0,0)$, we obtain that $\Delta_{11}\ge 0$, and similarly for the other eigenvalues of $\Xi_0$. Hence $\mathcal{C}_0\subset\overline{\Sym_3^+}$.

Conversely, if $\Xi_0\in \overline{\Sym_3^+}$, then Lemma \ref{une inegalite matricielle} shows that $\Xi_0:C\ge 0$ for all $C\in \overline{\Sym_3^+}$, hence a fortiori, all $C\in\Sym_3^+\cap\SL(3)$. Consequently,  $\overline{\Sym_3^+}\subset\mathcal{C}_0$.
\end{proof}

The next step is our main result.
\begin{proposition}\label{cool}
Our Lagrangian version of the Oldroyd B model satisfies the second law of thermodynamics $\overline{\Sym_3^+}$-conditionally.
\end{proposition}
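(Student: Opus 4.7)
My strategy is to establish the reverse inclusion $\overline{\Sym_3^+} \subset \mathcal{C}$; combined with Propositions \ref{carac de C} and \ref{condition C} this will actually yield the equality $\mathcal{C} = \overline{\Sym_3^+}$. By Proposition \ref{scoop!} the internal dissipation splits as
$$
\widehat D_{\mathrm{int}}(F,H,\Xi) = 2\eta_s\|\Sym(HF^{-1})\|^2 + \frac{\mu}{2\lambda_1}\,\Xi:C,
$$
so since the first term is manifestly nonnegative, everything reduces to proving that $\Xi(t):C(t) \ge 0$ at every $t \ge t_0$, along every admissible deformation launched from some $\Xi(t_0) \in \overline{\Sym_3^+}$.

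The key algebraic step I would perform is to combine $\tfrac{\partial C^{-1}}{\partial t} = -C^{-1}(H^TF + F^TH)C^{-1}$ with $C^{-1} = F^{-1}F^{-T}$ to obtain the identity $F^{-1}\Sym(HF^{-1})F^{-T} = -\tfrac12\tfrac{\partial C^{-1}}{\partial t}$. Substituting this into the flow rule \eqref{puisqu'il faut insister...} converts it into the scalar-coefficient linear ODE
$$
\frac{\partial \Xi}{\partial t} + \frac{1}{\lambda_1}\Xi = -\frac{\eta_p}{\mu\lambda_1}\frac{\partial C^{-1}}{\partial t}.
$$
I would solve this with the integrating factor $e^{t/\lambda_1}$, integrate from $t_0$ to $t$, and then integrate by parts on the right-hand side to obtain the closed-form expression
$$
\Xi(t) = e^{-(t-t_0)/\lambda_1}\Xi(t_0) - \frac{\eta_p}{\mu\lambda_1}C^{-1}(t) + \frac{\eta_p}{\mu\lambda_1}e^{-(t-t_0)/\lambda_1}C^{-1}(t_0) + \frac{\eta_p}{\mu\lambda_1^2}\int_{t_0}^t e^{-(t-s)/\lambda_1}C^{-1}(s)\,ds.
$$

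The last step is to take the Frobenius inner product of this identity with $C(t)$. One piece is exact, $C^{-1}(t):C(t) = \tr I = 3$. All other pieces are bounded below by Lemma \ref{une inegalite matricielle}: by incompressibility $\det C(s) = \det C^{-1}(s) = 1$, so $C^{-1}(s):C(t) \ge 3$ for every $s$, and $\Xi(t_0):C(t) \ge 0$ because both matrices belong to $\overline{\Sym_3^+}$. Plugging these bounds in and evaluating the elementary integral $\int_{t_0}^t e^{-(t-s)/\lambda_1}\,ds = \lambda_1(1 - e^{-(t-t_0)/\lambda_1})$, the four $\pm \tfrac{3\eta_p}{\mu\lambda_1}$ contributions telescope exactly to $0$, leaving $\Xi(t):C(t) \ge 0$.

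The main obstacle is conceptual rather than computational: the flow rule \eqref{puisqu'il faut insister...} does \emph{not} preserve the cone $\overline{\Sym_3^+}$ (as the authors explicitly warn), so one cannot hope to prove $\Xi(t) \in \overline{\Sym_3^+}$ and then conclude by a direct application of Lemma \ref{une inegalite matricielle}. Only the scalar quantity $\Xi(t):C(t) = \tfrac{1}{\mu}\tr\sigma_p(t)$ needs to stay nonnegative, and the structural identity $F^{-1}dF^{-T} = -\tfrac12 \partial_t C^{-1}$ is precisely what makes the integrating-factor argument produce integrals of $C^{-1}(s):C(t)$ whose incompressibility-driven lower bounds conspire to cancel exactly. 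Without this identity the explicit formula for $\Xi(t)$ would involve terms like $F^{-1}(s)d(s)F^{-T}(s):C(t)$ whose sign is completely uncontrolled, which is exactly why the naive Eulerian treatment fails (as noted in Section~\ref{le cas ou ca marche}).
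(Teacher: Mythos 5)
Your proof is correct and follows essentially the same route as the paper's: the identity $F^{-1}\Sym(HF^{-1})F^{-T}=-\tfrac12(C^{-1})'$, Duhamel's formula plus integration by parts, and Lemma~\ref{une inegalite matricielle} applied after taking the Frobenius product with $C(t)$. The only cosmetic difference is bookkeeping — the paper groups the $-3$ contributions inside the bracketed terms as $(C^{-1}(0):C(t)-3)$ and $(C^{-1}(s):C(t)-3)$, whereas you keep $-3\eta_*$ explicit and compute the elementary integral to exhibit the cancellation — but the argument is the same.
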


\begin{proof}We denote  partial derivatives with respect to time with a prime, as time is the single relevant variable in what follows, and of course omit $X$.

By Propositions \ref{carac de C} and \ref{condition C}, we know that the condition set is such that $\mathcal{C}\subset\overline{\Sym_3^+}$. Conversely, let us assume that the initial condition of the internal variable $\Xi{}(0)$ is positive semi-definite (set $t_0=0$ for simplicity). We consider an arbitrary admissible deformation. It is well known that 
$$
C'(t)=2F^T(t)d(t)F(t).
$$
Let us rewrite the second term in the flow rule \eqref{la bonne flow rule} using this remark:
$$
F^{-1}\Sym(HF^{-1})F^{-T}=\frac12F^{-1}F^{-T}C'F^{-1}F^{-T}=\frac12C^{-1}C'C^{-1}=-\frac12(C^{-1})'.
$$
The ordinary differential equation \eqref{puisqu'il faut insister...} for $\Xi{}$ is then rewritten as
\begin{equation}\label{insistons}
\Xi'=-\frac1{\lambda_1}\Xi{}-\eta_*(C^{-1})',
\end{equation}
where $\eta_*=\frac{\eta_p}{\mu\lambda_1}\ge0$. This differential equation
is linear with continuous right-hand side, therefore the Cauchy problem is well-posed on $\R_+$ for any initial value $\Xi{}(0)\in\overline{\Sym_3^+}$.\footnote{Of course, the well-posedness holds more generally for any $\Xi(0)\in\M_3$.} Moreover, it has constant coefficients, thus the Duhamel formula  provides an expression for $\Xi{}$,
$$
\Xi{}(t)=e^{-\frac t{\lambda_1}}\Xi{}(0)-{\eta_*} e^{-\frac t{\lambda_1}}\int_0^te^{\frac s{\lambda_1}}
(C^{-1})'(s)\,ds,
$$
 Integrating the second term by parts, we obtain
$$
e^{\frac t{\lambda_1}}\Xi{}(t)=\Xi{}(0)+{\eta_*}\Bigl(C^{-1}(0)-e^{\frac t{\lambda_1}}C^{-1}(t)+\frac1{\lambda_1}\int_0^te^{\frac s{\lambda_1}}
C^{-1}(s)\,ds\Bigr).
$$
Consequently, 
\begin{align*}
e^{\frac t{\lambda_1}}\Xi{}(t):C(t)&=\Xi{}(0):C(t)+{\eta_*} C^{-1}(0):C(t)-3{\eta_*} e^{\frac{t}{\lambda_1}}\\
&\qquad\qquad\qquad\qquad+\frac{\eta_*}{\lambda_1} \int_0^te^{\frac s{\lambda_1}}
C^{-1}(s):C(t)\,ds\\
&=\Xi{}(0):C(t)+{\eta_*} \bigl(C^{-1}(0):C(t)-3\bigr)\\
&\qquad\qquad\qquad\qquad+\frac{\eta_*}{\lambda_1} \int_0^te^{\frac s{\lambda_1}}
\bigl(C^{-1}(s):C(t)-3\bigr)\,ds.
\end{align*}
By Lemma \ref{une inegalite matricielle}, we first have $\Xi{}(0):C(t)\ge 0$ for all $t\ge 0$ since $\Xi{}(0)$ and $C(t)$ are both symmetric positive semi-definite. Secondly, $C^{-1}(s)$ and $C(t)$ are symmetric positive definite and belong to $\SL(3)$, by incompressibility. Therefore, also by Lemma \ref{une inegalite matricielle}, we have $C^{-1}(s):C(t)\ge3$ for all $s\ge 0$ and consequently $\Xi{}(t):C(t)\ge 0$ for all $t\ge 0$. The nonnegativity of the internal dissipation then follows from formula \eqref{dissipation Old B}.
\end{proof}

It is quite remarkable that this case of conditional second law is one in which the condition set $\mathcal{C}$ is not stable by the flow, as the following explicit example shows.

\begin{proposition}\label{damned}
If $\eta_p>0$, there is an admissible deformation such that $\Xi{}(X,0)\in\overline{\Sym_3^+}$, but $\Xi{}(X,t)\notin\overline{\Sym_3^+}$ for some $t>0$. If $\eta_p=0$, then $\Xi{}(X,0)\in\overline{\Sym_3^+}$ implies that $\Xi{}(X,t)\in\overline{\Sym_3^+}$ for all $t\ge0$.
\end{proposition}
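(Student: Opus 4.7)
I would split the argument along the dichotomy $\eta_p = 0$ versus $\eta_p > 0$, working directly with the rewritten flow rule \eqref{insistons}, namely $\Xi' = -\frac{1}{\lambda_1}\Xi - \eta_*(C^{-1})'$ with $\eta_* = \frac{\eta_p}{\mu \lambda_1}$.

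The case $\eta_p = 0$ is immediate: the flow rule collapses to $\Xi' = -\Xi/\lambda_1$, whose unique solution is $\Xi(t) = e^{-t/\lambda_1}\Xi(0)$. Since $\overline{\Sym_3^+}$ is a closed convex cone invariant under multiplication by positive scalars, positive semi-definiteness is preserved for all $t \geq 0$.

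For $\eta_p > 0$, the plan is to exhibit a single explicit admissible deformation together with an initial value $\Xi(0) = 0 \in \overline{\Sym_3^+}$ for which $\Xi(t)$ leaves $\overline{\Sym_3^+}$. I would use the diagonal biaxial stretch $\phi(X,t) = (e^t X_1, e^{-t} X_2, X_3)$, which is linear in $X$, incompressible, and a diffeomorphism on any bounded $\Omega$, hence admissible. Then $F(t)$, $C(t)$ and $(C^{-1})'(t) = \mathop{\textnormal{diag}}(-2e^{-2t}, 2e^{2t}, 0)$ are all diagonal, so the Duhamel formula from the proof of Proposition \ref{cool} keeps $\Xi(t)$ diagonal. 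A routine scalar integration shows that its $(2,2)$ entry is of the form $-\frac{2\eta_*(e^{2t} - e^{-t/\lambda_1})}{1/\lambda_1 + 2}$, which is strictly negative for every $t > 0$ since $e^{2t} > 1 > e^{-t/\lambda_1}$. Therefore $\Xi(t) \notin \overline{\Sym_3^+}$ for all $t > 0$.

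The only real obstacle is bookkeeping of signs in the integration, which is routine. Should one prefer a less computational route, a purely infinitesimal argument also works: with $F(0) = I$ and $\Xi(0) = 0$ the ODE gives $\Xi'(0) = -\eta_*(C^{-1})'(0) = 2\eta_* d(0)$, and choosing any nonzero traceless $d(0)$ (for instance $\mathop{\textnormal{diag}}(1,-1,0)$, which necessarily has a negative eigenvalue by incompressibility) forces $\Xi(t)$ to exit $\overline{\Sym_3^+}$ for small $t > 0$ by continuity. I would nonetheless favour the explicit stretch since it demonstrates exit at every $t > 0$, not merely arbitrarily small times, and makes the role of $\eta_p > 0$ completely transparent.
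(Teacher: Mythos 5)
Your proposal is correct and follows essentially the same route as the paper: the same biaxial stretch $\phi(X,t)=\diag(e^t,e^{-t},1)X$, explicit integration of the linear flow rule, and the observation that a diagonal entry of $\Xi(t)$ becomes negative, with the $\eta_p=0$ case handled identically via $\Xi(t)=e^{-t/\lambda_1}\Xi(0)$. The only cosmetic difference is that you fix $\Xi(0)=0$ (which suffices for the existential statement), whereas the paper shows that for \emph{any} positive semi-definite $\Xi(0)$ the smallest eigenvalue tends to $-\infty$, so exit occurs in finite time regardless of the initial value.
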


\begin{proof}
Assume first that $\eta_p>0$.
We take  $\phi(X,t)=F(t)X$ with $F(t)=\diag(e^t,e^{-t},1)$. This is obviously an admissible deformation, which corresponds to a steady Eulerian flow $h(t)=d(t)=\diag(1,-1,0)$. Assuming $\lambda_1=1$ without loss of generality, equation \eqref{puisqu'il faut insister...} becomes
 in this case, 
 $$
 \bigl(e^t\Xi{}(t)\bigr)'=\frac{2\eta_p}\mu\diag(e^{-t},-e^{3t},0).
 $$
 Integrating this between $0$ and $t$, we obtain
$$
\Xi{}(t)=e^{-t}\Xi{}(0)+\frac{2\eta_p}\mu \diag\Bigl(e^{-t}-e^{-2t},\frac13\bigl(e^{-t}-e^{2t}\bigr),0\Bigr).
$$
If $\eta_p>0$, then for any positive semi-definite $\Xi{}(0)$, the smallest eigenvalue of $\Xi{}(t)$ thus goes to $-\infty$ when $t\to+\infty$. Consequently, $\Xi{}(t)$ exits $\overline{\Sym_3^+}$ in finite time (and stays out of $\overline{\Sym_3^+}$ for all subsequent times).

If $\eta_p=0$, then for all admissible deformations, $\Xi{}(t)=e^{-\frac t{\lambda_1}}\Xi{}(0)$ for all $t$. Therefore, the flow rule obviously preserves positive semi-definiteness in this particular case.
\end{proof}

\begin{remark}
In this example, we obtain that
$$
\Xi{}(t):C(t)=e^{-t}\Xi{}(0):C(t)+\frac{2\eta_p}\mu\Bigl(e^t+\frac13e^{-3t}-\frac43\Bigr)\ge 0
$$
as expected for all $t$.

Note that when $\eta_p=0$, the result of Proposition \ref{cool} is trivial in view of Lemma \ref{une inegalite matricielle} and the second part of Proposition \ref{damned}.
\end{remark} 

See also another example, albeit a numerical one, of evolution that does not preserve positive semi-definiteness in Section \ref{ZJ and co}, Figure \ref{fig2}.

\begin{remark}\label{borne inf}We remark that equation \eqref{insistons} may be rewritten as
$$
 \bigr(\Xi+\eta_*C^{-1}\bigl)'=-\frac1{\lambda_1} \bigr(\Xi+\eta_*C^{-1}\bigl)+\frac{\eta_*}{\lambda_1}C^{-1}.
$$
Since $C^{-1}$ is always positive definite, it follows that if $\Xi(0)\in\mathcal{C}_0$, then $\Xi+\eta_*C^{-1}$ is also always positive definite. In other words, $\Xi(t)>-\eta_*C^{-1}(t)$ in the sense of quadratic forms. This lower bound does not seem to help in proving that the dissipation remains nonnegative though. In Eulerian terms, it follows that 
$$\xi(t)>-\frac{\eta_p}{\lambda_1}I$$
as soon as $\xi(0)$ is positive semi-definite. A similar lower bound was observed in \cite{Brunk}.
\end{remark}

The following is then a direct consequence of Proposition \ref{pareil en eulerien}.

\begin{corollary}\label{pas de changement cote Euler}The Eulerian dissipation $d_{\mathrm{int}}$ remains nonnegative for all times and all given velocity fields if and only if $\sigma_p(0)$ is positive semi-definite.
\end{corollary}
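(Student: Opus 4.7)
The plan is to translate the Lagrangian statement of Proposition \ref{cool} into Eulerian terms, using the dictionary $\sigma_p=\mu F\Xi F^T$ established in \eqref{polymer stress} together with the identity $D_{\mathrm{int}}(X,t)=d_{\mathrm{int}}(x,t)$ at corresponding space-time points recorded in the remark following Proposition \ref{pareil en eulerien}. Since an Eulerian velocity field $v$ integrates to an admissible deformation with $F(X,0)$ invertible (we take $F(X,0)=I$ as reference, or more generally note that conjugation by any invertible matrix preserves positive semi-definiteness), the two conditions $\sigma_p(0)\in\overline{\Sym_3^+}$ and $\Xi(0)\in\overline{\Sym_3^+}$ are equivalent.

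For the sufficient direction, suppose $\sigma_p(\cdot,0)\in\overline{\Sym_3^+}$, so that $\Xi(\cdot,0)\in\overline{\Sym_3^+}$. Proposition \ref{cool} asserts that our Lagrangian Oldroyd B model satisfies the second law $\overline{\Sym_3^+}$-conditionally. Hence $D_{\mathrm{int}}(X,t)\ge 0$ at every $X$ and every $t\ge 0$, for the deformation obtained by integrating the prescribed $v$. The remark after Proposition \ref{pareil en eulerien} then transfers this inequality to $d_{\mathrm{int}}(x,t)\ge 0$, at every corresponding $(x,t)$.

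For the necessary direction, I would argue by contraposition. If $\sigma_p(\cdot,0)\notin\overline{\Sym_3^+}$, then $\Xi(\cdot,0)\notin\overline{\Sym_3^+}$. By Proposition \ref{condition C}, $\overline{\Sym_3^+}=\mathcal{C}_0$, and by Proposition \ref{carac de C}, $\mathcal{C}\subset\mathcal{C}_0$, so this initial value lies outside $\mathcal{C}$. Invoking Definition \ref{ze big def}, there exists an admissible deformation on which $D_{\mathrm{int}}$ becomes strictly negative at some time $t>0$ and some point $X$, and transferring to the Eulerian side produces a velocity field and a time at which $d_{\mathrm{int}}<0$.

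The proof is essentially a bookkeeping exercise once one trusts that the Lagrangian and Eulerian dissipations coincide pointwise and that velocity fields are in bijection with admissible deformations up to the prescribed initial placement; the only genuine subtlety is keeping straight that the condition on $\xi(0)=\sigma_p(0)$ must not be confused with a condition on $\xi(t)$ for later $t$, since Proposition \ref{damned} already warned us that $\overline{\Sym_3^+}$ is not flow-invariant. No real obstacle remains beyond citing the right previous results.
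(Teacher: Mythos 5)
Your argument is correct and follows the same route the paper intends when it calls the corollary a "direct consequence": combine the equality of Lagrangian and Eulerian dissipations at corresponding points (the remark after Proposition \ref{pareil en eulerien}), the observation that $\sigma_p(0)=\mu F(0)\Xi(0)F(0)^T$ preserves positive semi-definiteness under conjugation by the invertible $F(0)$, and Proposition \ref{cool} for sufficiency together with $\mathcal{C}\subset\mathcal{C}_0=\overline{\Sym_3^+}$ (Propositions \ref{carac de C} and \ref{condition C}) plus Definition \ref{ze big def} for necessity. You have simply spelled out what the paper leaves implicit.
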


Of course, the naive dissipation $\sigma:d$ will still change sign for some velocity fields. The benefits of working with the Lagrangian formulation are quite clear: We just have to deal with a simple linear ordinary differential equation, with the space variable $X$ only a parameter, for which we can use Duhamel's formula to get an explicit solution. We do not have to worry about the intricacies of the Oldroyd B derivative in the Eulerian picture.

The question now is whether it is legitimate to only consider initial values for the internal variable that ensure the conditional second law. For a general model, the physical meaning of a given internal variable may be unclear. It can be a question of principle to only pick such initial conditions. In the particular case of the Oldroyd B fluid, we actually have
$$
\Xi{}(0)=\frac1\mu\Sigma_p(0)=\frac1\mu F^{-1}(0)\sigma_p(0)F^{-T}(0),
$$
or directly in the Eulerian description
$$\xi(0)=\sigma_p(0).$$
Now of course, the polymer stress is not entirely well defined from the constitutive point of view, unless an initial value is chosen for it. Furthermore, since the model is incompressible, the actual physical Cauchy stress is of the form
$$\sigma(0)=2\eta_sd(0)+\sigma_p(0)-p(0)I,$$
where $p$ is the indeterminate pressure. So the quantity that is physically meaningful is $\sigma_p(0)-p(0)I=\sigma(0)-2\eta_sd(0)$. From the constitutive point of view, we are at liberty to incorporate some of the indeterminate pressure into $\sigma_p(0)$ in such a way that it becomes positive semi-definite without changing the right-hand side, and thus consider the second law to be satisfied in all situations. 

If we were considering an initial-boundary value problem that was well-posed, then the pressure would be determined by the problem data and the above liberty would not necessarily be available. 

\begin{remark}A natural question is whether or not the present formulation of the Oldroyd B model admits a dissipation potential in the sense of standard generalized materials, \cite{GFOLP}-\cite{Halphen}, or in the somewhat different sense introduced in \cite{HLDAR1}.

In the case $\eta_p>0$, the answer is clearly negative, otherwise the second law would hold unconditionally. The case $\eta_p=0$ can be considered as unconditional if we restrict the internal variable to $\overline{\Sym_3^+}$, which is then invariant under the flow rule. This is a convex set and the function $\widehat P_{\mathrm{diss}}\colon
 \M_3^+\times \M_3\times\overline{\Sym_3^+}\times\overline{\Sym_3^+}\to\R$ defined by
$$\widehat P_{\mathrm{diss}}(F,H,\Xi{},\Lambda)=\eta_s\|\Sym(HF^{-1})\|^2+\frac1{\lambda_1}\Xi{}:\Lambda.$$
is clearly convex with respect to $(H,\Lambda)$, takes nonnegative values, and is such that $\widehat P_{\mathrm{diss}}(F,0,\Xi{},0)=0$. This function is a dissipation potential for $\widehat T_{\mathrm{Rd}}$\footnote{The dissipation potential for the compressible Newtonian fluid in Lagrangian form we wrote in Remark 3.9 of \cite{HLDAR1} is incorrect and should read $\widehat P_{\mathrm{diss}}(F,H)=\nu\det F\|\Sym(HF^{-1})\|^2$.} and  $\widehat K$, in a slightly generalized sense compared to \cite{HLDAR1} allowing for more flexibility in the arguments of the potential, namely that we have
 $\widehat T_{\mathrm{Rd}}(F,H,\Xi{})=\frac{\partial \widehat P_{\mathrm{diss}}}{\partial H}\bigl(F,H,\Xi{},\frac{\partial\widehat A_m}{\partial \Xi{}}(F,\Xi{})\bigr)$
and
  $\widehat K(F,H,\Xi{})=-\frac{\partial  \widehat P_{\mathrm{diss}}}{\partial\Lambda}\bigl(F,H,\Xi{},\frac{\partial\widehat A_m}{\partial \Xi{}}(F,\Xi{})\bigr)$, thus yielding a nonnegative dissipation. Of course, this remark adds very little insight into this particular situation compared to our direct approach.
\end{remark}

\section{A few variants of the  Oldroyd B model}\label{pour quelques modeles de plus}

In this section, we develop a few generalizations of the Oldroyd B model based on Lagrangian formulations and discuss their relationship with the second law. 

\subsection{Nonlinear Oldroyd B models}\label{nonlinear oldB}

We now define a whole family of nonlinear incompressible Oldroyd B-type models in the Lagrangian description. We keep $\widehat A_m$ based on the neo-Hookean material as in \eqref{KVM ansatz}, so that $\sigma_p=\mu F\Xi{}F^T$ as in \eqref{polymer stress}, but consider more elaborate flow rules. Specifically, given $k\in\N$, we set 
\begin{equation}\label{flow rule nl}
\widehat K_k(F,H,\Xi{})=-\frac{\mu^k}{\lambda_1\mu_k}\Xi{}(C\Xi{})^{k}+\frac{2\eta_p}{\mu\lambda_1}F^{-1}\Sym(HF^{-1})F^{-T}.
\end{equation}
For $k=0$ and $\mu_0=1$, this is the linear flow rule \eqref{la bonne flow rule} which gives rise to the usual Oldroyd B model. The constant $\mu_k>0$ is a physical parameter that is homogeneous to a pressure to the power $k$. Indeed, $C$ and $\Xi{}$ are dimensionless whereas the shear modulus $\mu$ is homogenous to a pressure. 

\begin{proposition}\label{sigma p non lineaire}
The Eulerian polymer stress corresponding to the data \eqref{KVM ansatz}, \eqref{newtonian}, and \eqref{flow rule nl} satisfies
\begin{equation}\label{stress p nl}
\overset{\triangledown}{\sigma}_p=-\frac1{\lambda_1\mu_k}\sigma_p^{k+1}+\frac{2\eta_p}{\lambda_1}d.
\end{equation}
\end{proposition}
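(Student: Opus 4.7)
The proof should follow the template of Proposition \ref{jackpot!}, since only the flow rule has changed and the Helmholtz free energy, kinematic viscous stress, and hence the expressions \eqref{newtonian euler} and \eqref{polymer stress} are unchanged. The plan is to apply Proposition \ref{et voila OB} to obtain $\overset{\triangledown}{\sigma}_p=\mu F\frac{\partial\Xi}{\partial t}F^T$, substitute the new flow rule \eqref{flow rule nl}, and recognize the resulting expression in terms of powers of $\sigma_p$.

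More precisely, I would first observe that since $\Sigma_p=\mu\Xi$ as in \eqref{Z simple} still holds, Proposition \ref{et voila OB} yields $\overset{\triangledown}{\sigma}_p=\mu F\frac{\partial\Xi}{\partial t}F^T$. Substituting \eqref{flow rule nl} gives
\begin{equation*}
\overset{\triangledown}{\sigma}_p=-\frac{\mu^{k+1}}{\lambda_1\mu_k}F\Xi(C\Xi)^k F^T+\frac{2\eta_p}{\mu\lambda_1}\mu FF^{-1}\Sym(HF^{-1})F^{-T}F^T.
\end{equation*}
The second term simplifies immediately to $\frac{2\eta_p}{\lambda_1}\Sym(HF^{-1})=\frac{2\eta_p}{\lambda_1}d$ at corresponding space-time points, exactly as in the proof of Proposition \ref{jackpot!}.

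The only genuinely new point is the identification of the first term as $-\frac{1}{\lambda_1\mu_k}\sigma_p^{k+1}$. For this, I would prove by induction on $n\ge 1$ the identity
\begin{equation*}
\sigma_p^n=\mu^n F\Xi(C\Xi)^{n-1}F^T,
\end{equation*}
the base case $n=1$ being just \eqref{polymer stress}. The induction step uses $C=F^TF$: assuming the identity at rank $n$,
\begin{equation*}
\sigma_p^{n+1}=(\mu F\Xi F^T)\bigl(\mu^n F\Xi(C\Xi)^{n-1}F^T\bigr)=\mu^{n+1}F\Xi C\Xi(C\Xi)^{n-1}F^T=\mu^{n+1}F\Xi(C\Xi)^n F^T.
\end{equation*}
Applied at $n=k+1$, this yields $\mu^{k+1}F\Xi(C\Xi)^k F^T=\sigma_p^{k+1}$, and \eqref{stress p nl} follows.

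There is no real obstacle here: the argument is a direct extension of the $k=0$ case already established in Proposition \ref{jackpot!}, with the small additional algebraic lemma above. The point worth double-checking is the appearance of the scalar parameters: the factor $\mu^k/\mu_k$ in the flow rule is precisely what is needed so that $\mu\cdot(\mu^k/\mu_k)=\mu^{k+1}/\mu_k$ combines with the geometric identity to produce $\sigma_p^{k+1}/\mu_k$, keeping the dimensional homogeneity (a pressure to the power $k+1$ divided by $\mu_k$) consistent with the left-hand side.
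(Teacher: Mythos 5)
Your proof is correct and follows essentially the same route as the paper: substitute the flow rule \eqref{flow rule nl} into \eqref{derivee Old B du stress p} and identify the nonlinear term. The paper simply asserts the identity $\mu^{k+1}F\Xi(C\Xi)^k F^T=(\mu F\Xi F^T)^{k+1}$ without comment, whereas you supply the short induction using $F^TF=C$; that extra detail is accurate and harmless.
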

\begin{proof}
We proceed exactly as in the proof of Proposition \ref{jackpot!} by substituting the flow rule \eqref{flow rule nl} into equation \eqref{derivee Old B du stress p}. The second term in the flow rule gives rise to the second term in the right-hand side of \eqref{stress p nl} as before.  The first term becomes
$$
\mu F\Bigl(-\frac{\mu^k}{\lambda_1\mu_k}\Xi{}(C\Xi{})^{k}\Bigr)F^T=-\frac1{\lambda_1\mu_k}\bigl(\mu F\Xi{}F^T\bigr)^{k+1}
=-\frac1{\lambda_1\mu_k}\sigma_p^{k+1}
$$
by equation \eqref{polymer stress}.
\end{proof}

We thus obtain nonlinear incompressible Oldroyd B models by writing $\sigma=\sigma_s+\sigma_p$ with $\sigma_s=2\eta_s d$.
\begin{corollary}\label{modele non lineaire}
The Eulerian form of the above Lagrangian model reads
\begin{equation}\label{equa diff Old B nl}
\frac1{\mu_k}(\sigma-2\eta_s d)^{k+1}+\lambda_1\overset{\triangledown}{\sigma}=2\bigl(\eta_pd+\lambda_1\eta_s\overset{\triangledown}{d}\bigr).
\end{equation}
\end{corollary}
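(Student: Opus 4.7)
The plan is essentially to start from the polymer stress evolution equation \eqref{stress p nl} provided by Proposition \ref{sigma p non lineaire} and eliminate $\sigma_p$ in favor of the total Cauchy stress $\sigma$, using the additive decomposition $\sigma=\sigma_s+\sigma_p$ with the Newtonian solvent stress $\sigma_s=2\eta_s d$.

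First, I would write $\sigma_p=\sigma-2\eta_s d$ and substitute this into the right-hand side of \eqref{stress p nl}, which directly produces the $(k+1)$-th power term $\frac{1}{\lambda_1\mu_k}(\sigma-2\eta_s d)^{k+1}$. For the left-hand side, I would use that the Oldroyd B (upper convected) derivative $\overset{\triangledown}{(\cdot)}=\dot{(\cdot)}-h(\cdot)-(\cdot)h^T$ is a linear first-order differential operator in time, hence
\begin{equation*}
\overset{\triangledown}{\sigma}_p=\overset{\triangledown}{\sigma}-2\eta_s\overset{\triangledown}{d}.
\end{equation*}

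Combining these two substitutions into \eqref{stress p nl}, multiplying through by $\lambda_1$ and transferring the power term to the left and the $\overset{\triangledown}{d}$ term to the right yields exactly
\begin{equation*}
\frac1{\mu_k}(\sigma-2\eta_s d)^{k+1}+\lambda_1\overset{\triangledown}{\sigma}=2\eta_p d+2\lambda_1\eta_s\overset{\triangledown}{d},
\end{equation*}
which is \eqref{equa diff Old B nl}.

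There is no real obstacle here: the only thing to invoke explicitly is the linearity of the upper convected derivative with respect to its argument (at fixed $h$), which is immediate from its definition. The result is therefore a direct algebraic consequence of Proposition \ref{sigma p non lineaire} and of the decomposition $\sigma=2\eta_s d+\sigma_p$.
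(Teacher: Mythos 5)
Your proof is correct and is precisely the computation the paper has in mind: the corollary is stated immediately after the remark that one writes $\sigma=\sigma_s+\sigma_p$ with $\sigma_s=2\eta_s d$, and the substitution into \eqref{stress p nl} together with the linearity of $\overset{\triangledown}{(\cdot)}$ in its argument at fixed $h$ is the whole argument. Nothing is missing.
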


\begin{remark}These models are frame-indifferent by construction. This also follows from the fact that equation \eqref{equa diff Old B nl} can be rearranged as
\begin{equation*}\label{nl general}
\lambda_1\overset{\pentagon_1}{\sigma}=2\eta\lambda_2\overset{\pentagon_2}{d},
\end{equation*}
where $\eta=\eta_s+\eta_p$ and $\lambda_2=\lambda_1\eta_s/\eta$ as before, with more or less complicated objective derivatives $\bigpenta_i$.

 There is no nice general expression for $\bigpenta_i$ because of the noncommutative binomial expression. It however yields a symmetric-valued polynomial in $(\sigma,d)$, which is an objective function, see Section \ref{les derivees objectives}. Let us expand the simplest nonlinear example.
 
 For $k=1$, since $
(\sigma-2\eta_s d)^{2}= \sigma^2-2\eta_s(\sigma d+d\sigma)+4\eta_s^2 d^2
$,
it follows that \eqref{equa diff Old B nl} can be rearranged as
$$
\sigma^2-2\eta_s(\sigma d+d\sigma)+\mu_1\lambda_1\overset{\triangledown}{\sigma}=
2\Bigl(\mu_1\eta_pd-2\eta_s^2d^2+\mu_1\lambda_1\eta_s\overset{\triangledown}{d}\Bigr).
$$
%
 When $\eta_p=0$, \emph{i.e.}, $\lambda_1=\lambda_2$, and $\mu_1=1$, this is precisely the quadratic model obtained by \cite{GFOLP}.


All these models involve nonlinear differential equations for $\Xi{}$ or $\sigma_p$ as soon as $k\ge 1$, whereas the corresponding differential  equations are linear for $k=0$. Consequently, given a prescribed, smooth enough deformation $\phi(X,t)$ and an initial value for $\Xi{}$ or $\sigma_p$, the existence of the internal variables is  a priori only ensured locally in time when $k\ge 1$. It is not clear that they exist globally in time. 
\begin{remark}
In the quadratic case $k=1$, we see that $Z=\frac\mu{\lambda_1\mu_1}\Xi{}$ is a solution of the matrix Riccati equation $Z'+ZCZ=G$, with $G=-\frac{\eta_p}{\lambda_1^2\mu_1}(C^{-1})'$. If $G=0$, that is to say $\eta_p=0$ as in \cite{GFOLP}, or $C$ constant in time, then the homogeneous Riccati equation is classically solved as $Z(t)=Z(0)\bigl(I+\bigl(\int_0^tC(s)\,ds\bigr)Z(0)\bigr)^{-1}$, provided the matrix between parentheses is invertible. This is the case when $Z(0)$ is positive semi-definite. Indeed, let $Y(t)=I+\bigl(\int_0^tC(s)\,ds\bigr)Z(0)$. Consider $u\in \ker Y(t)$. We thus have
$$0=u+\Bigl(\int_0^tC(s)\,ds\Bigr)Z(0)u.$$ Multiplying this on the left by $u^TZ(0)$, we obtain $$0=u^TZ(0)u+u^TZ(0)\Bigl(\int_0^tC(s)\,ds\Bigr)Z(0)u.$$ Both terms are nonnegative, hence 
$u^TZ(0)\bigl(\int_0^tC(s)\,ds\bigr)Z(0)u=0$.
 But $C(s)$ is positive definite for all $s$, and so is its integral. It follows that $Z(0)u=0$, and thus $u=0$.
 
 Conversely, if we assume that $Z(0)$ has at least one strictly negative eigenvalue, taking $C(t)=I$ makes $Y(t)$ become non invertible in finite time, which corresponds to blowup of the Riccati equation solution. 
So positive semi-definiteness of the initial condition is a necessary and sufficient condition for global existence of the internal variable in these particular cases.

If on the other hand $G\neq 0$, it is possible to give examples of $C(t)$ such that the associated Riccati equation with positive initial conditions blows up in finite time. Sufficient conditions for global existence may possibly be obtained by optimal control arguments.
\end{remark}
More generally, by taking linear combinations of the first terms of several flow rules of the form \eqref{flow rule nl}, we obtain models of the form
\begin{equation*}\label{encore plus general}
\overset{\triangledown}{\sigma}_p=-\frac1{\lambda_1}\sigma_pP(\sigma_p)+\frac{2\eta_p}{\lambda_1}d,
\end{equation*}
where $P\in\R[X]$ is any polynomial. For instance, $P(X)=1+\varepsilon X$, $\varepsilon>0$ small, would correspond to a quadratic correction of the classical linear Oldroyd B model.
\end{remark}

Let us now discuss second law issues for these nonlinear models, which is again made possible by the Lagrangian point of view. We just consider the monomial models above.

\begin{proposition}The constitutive law for the internal dissipation corresponding to the flow rule $\widehat K_k$ is given by 
$$
 \widehat D_{\mathrm{int}}(F,H,\Xi{})=2\eta_s\|\Sym(HF^{-1})\|^2+\frac{\mu^{k+1}}{2\lambda_1\mu_k}\tr\bigl((C\Xi{})^{k+1}\bigr)
$$
in Lagrangian form and
 $$
 \widehat d_{\mathrm{int}}(h,\sigma_p)=2\eta_s\|d\|^2+\frac1{2\lambda_1\mu_k}\tr\bigl( \sigma_p^{k+1}\bigr)
 $$
 in Eulerian form.
\end{proposition}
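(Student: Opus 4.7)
The proof will closely mirror that of Proposition \ref{scoop!}, with only the nonlinear term in $\widehat K_k$ requiring a slightly different algebraic manipulation. Since $\widehat A_m$ is chosen independent of $H$ (as in Section \ref{section lagrangienne}), the defining formula \eqref{def de lc de diss} directly gives
\[
\widehat D_{\mathrm{int}}(F,H,\Xi)=\widehat T_{\mathrm{Rd}}(F,H,\Xi):H-\frac{\partial\widehat A_m}{\partial\Xi}(F,\Xi):\widehat K_k(F,H,\Xi),
\]
with $\frac{\partial\widehat A_m}{\partial\Xi}(F,\Xi)=\frac{\mu}{2}C$ by \eqref{derivees KVM ansatz}.

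The viscous term $\widehat T_{\mathrm{Rd}}:H=2\eta_s\|\Sym(HF^{-1})\|^2$ is copied verbatim from the corresponding computation in Proposition \ref{scoop!}, and likewise the contribution coming from the second (Newtonian) summand of $\widehat K_k$ reduces to $-\frac{\eta_p}{\lambda_1}L(H)$ with $L(H)=\tr(\Sym(HF^{-1}))=\tr d=0$ by incompressibility, so it vanishes.

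What remains is the contribution of the first summand of $\widehat K_k$, namely
\[
-\frac{\mu}{2}C:\left(-\frac{\mu^k}{\lambda_1\mu_k}\Xi(C\Xi)^k\right)=\frac{\mu^{k+1}}{2\lambda_1\mu_k}\,C:\bigl(\Xi(C\Xi)^k\bigr)=\frac{\mu^{k+1}}{2\lambda_1\mu_k}\tr\bigl((C\Xi)^{k+1}\bigr),
\]
using the cyclic property of the trace. This establishes the Lagrangian formula.

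For the Eulerian form, I will use the relation $\sigma_p=\mu F\Xi F^T$ from \eqref{polymer stress}, which gives $\Xi=\frac{1}{\mu}F^{-1}\sigma_p F^{-T}$ and therefore $C\Xi=\frac{1}{\mu}F^T\sigma_p F^{-T}$. By an easy induction, $(C\Xi)^{k+1}=\frac{1}{\mu^{k+1}}F^T\sigma_p^{k+1}F^{-T}$, so that by cyclicity of the trace
\[
\tr\bigl((C\Xi)^{k+1}\bigr)=\frac{1}{\mu^{k+1}}\tr\bigl(\sigma_p^{k+1}\bigr).
\]
Combining with $\|\Sym(HF^{-1})\|^2=\|d\|^2$ at corresponding space-time points yields the second formula. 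There is no genuine obstacle here, the computation being essentially a nonlinear variant of the $k=0$ case treated in Proposition \ref{scoop!}; the only mild bookkeeping point is to verify that the $k$-th power carries cleanly through the conjugation by $F$, which is precisely what the identity $(F^TMF^{-T})^{k+1}=F^TM^{k+1}F^{-T}$ ensures.
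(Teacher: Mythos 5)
Your proof is correct and follows essentially the same route as the paper: reuse the computations from Proposition \ref{scoop!} verbatim for the viscous and Newtonian terms, handle the new monomial term by cyclicity of the trace, and translate to Eulerian form using $\sigma_p=\mu F\Xi F^T$. The only cosmetic difference is that the paper passes to Eulerian form via $\tr\bigl((C\Xi)^{k+1}\bigr)=\tr\bigl((F\Xi F^T)^{k+1}\bigr)$ directly (again cyclicity), whereas you substitute $\Xi=\frac1\mu F^{-1}\sigma_p F^{-T}$ and use the conjugation identity; both are the same trick.
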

\begin{proof}
Many terms are the same as before, we only focus on the one that is different, namely
$$
\frac\mu2 C:\Bigl(\frac{\mu^k}{\lambda_1\mu_k}\Xi{}(C\Xi{})^{k}\Bigr)=\frac{\mu^{k+1}}{2\lambda_1\mu_k}\tr\bigl((C\Xi{})^{k+1}\bigr).
$$
To translate this into Eulerian terms, we notice that 
$$
\tr\bigl((C\Xi{})^{k+1}\bigr)=\tr\bigl((F\Xi{}F^T)^{k+1}\bigr)
$$
hence the result since $\sigma_p=\mu F\Xi{}F^T$.
\end{proof}

\begin{proposition}\label{ce qui compte}
For $k$ odd, the model satisfies the second law unconditionally, \emph{i.e.}, in the sense of Coleman-Noll, as long as the internal variables exist.\end{proposition}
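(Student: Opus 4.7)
The plan is to show directly from the explicit formula for $\widehat D_{\mathrm{int}}$ proved just above that the dissipation is a sum of two manifestly nonnegative quantities whenever $k$ is odd, without any restriction on $\Xi$. The Newtonian solvent term $2\eta_s\|\Sym(HF^{-1})\|^2$ is obviously nonnegative, so everything reduces to controlling the sign of $\tr\bigl((C\Xi)^{k+1}\bigr)$, where $C\in\Sym_3^+$ and $\Xi\in\Sym_3$ is arbitrary.

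The key observation is that although the product $C\Xi$ is not symmetric in general, it is similar to a symmetric matrix. Indeed, writing $C^{1/2}$ for the symmetric positive definite square root of $C$, we have
$$
C\Xi = C^{1/2}\bigl(C^{1/2}\Xi C^{1/2}\bigr)C^{-1/2},
$$
so $C\Xi$ is conjugate to $S:=C^{1/2}\Xi C^{1/2}\in\Sym_3$. Consequently $(C\Xi)^{k+1}$ is conjugate to $S^{k+1}$ and has the same trace: $\tr\bigl((C\Xi)^{k+1}\bigr)=\tr(S^{k+1})$.

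Now since $S$ is symmetric, it has three real eigenvalues $\mu_1,\mu_2,\mu_3$ (with no sign constraint, as $\Xi$ is only assumed symmetric), and $\tr(S^{k+1})=\sum_{i=1}^3\mu_i^{k+1}$. For $k$ odd, the exponent $k+1$ is even, so $\mu_i^{k+1}\ge 0$ for every $i$, which yields $\tr\bigl((C\Xi)^{k+1}\bigr)\ge 0$. Combined with the nonnegativity of the solvent term and the positivity of the prefactor $\frac{\mu^{k+1}}{2\lambda_1\mu_k}$, this gives $\widehat D_{\mathrm{int}}(F,H,\Xi)\ge 0$ for all $(F,H)\in T\SL(3)$ and all $\Xi\in\Sym_3$, which is exactly the unconditional second law in the Coleman-Noll sense. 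No step is really an obstacle here; the only subtlety worth flagging is the caveat that the internal variable $\Xi(t)$ must actually exist, since as noted in the preceding remark the Riccati-type flow rule \eqref{flow rule nl} may blow up in finite time for $k\ge 1$. The statement is about nonnegativity of the dissipation wherever the evolution is defined.
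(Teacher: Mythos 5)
Your proof is correct and follows essentially the same route as the paper: the paper simply reads off the result from the Eulerian form of the dissipation, where the relevant term is $\tr(\sigma_p^{k+1})$ with $\sigma_p=\mu F\Xi F^T$ symmetric and $k+1$ even, which is exactly your observation that $\tr\bigl((C\Xi)^{k+1}\bigr)$ equals the trace of an even power of a symmetric matrix (you conjugate by $C^{1/2}$, the paper by $F$). Your caveat about existence of the internal variable matches the statement's "as long as the internal variables exist."
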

\begin{proof}
Indeed, $k+1$ is then even, therefore  $\tr (\sigma_p^{k+1})\ge 0$ for any $\sigma_p\in\Sym_3$.
\end{proof}
The quadratic case $k=1$ is thus unconditional. We have just noticed that positive semi-definiteness of the initial condition is nonetheless necessary and sufficient for the global existence of the internal variable in the homogeneous case. 
 
 \begin{remark}
 For $k\ge 2$ even, we do not have an explicit formula for $\Xi{}$ to work with as in the case $k=0$. For the conditional second law to hold, the initial dissipation must still be positive for all $C(0)=C\in \Sym_3^+\cap\SL(3)$. Since $\Xi{}(0)$ is symmetric, it is orthogonally diagonalizable with $\Xi{}(0)=Q^T\Delta Q$, $\Delta=\diag(v_j)$. 
  \begin{multline*}
\tr\bigl((C\Xi{}(0))^{k+1}\bigr)=\tr\bigl((CQ^T\Delta Q)^{k+1}\bigr)\\
=\tr\bigl(Q(CQ^T\Delta Q)^{k+1}Q^T\bigr)
=\tr\bigl((QCQ^T\Delta)^{k+1}\bigr).
\end{multline*}
Now, as in the proof of Proposition \ref{condition C}, $Q^TCQ$ can in fact be any matrix in $\Sym_3^+$, not just those in $\SL(3)$, in particular $C=\diag(c_j)$, $c_j>0$. It follows that we must have
$\sum_{j=1}^3c_j^{k+1}v_j^{k+1}\ge 0$, hence $v_j^{k+1}\ge 0$ for all $j$. Since $k+1$ is odd, it is necessary that $v_j\ge 0$, so that $\Xi{}(0)$ must be positive semi-definite.

 Conversely, for any $C\in\Sym_3^+$, 
 $\tr\bigl((C\Xi{}(0))^{k+1}\bigr)=\bigl((C\Xi{}(0))^{k}C\bigr):\Xi{}(0)$. Since $k$ is even, it is clear that for all $B\in \Sym_3$,  $(CB)^{k}C\in\overline{\Sym_3^+}$. Therefore, if $\Xi{}(0)$ is positive semi-definite, then 
 $\tr\bigl((C\Xi{}(0))^{k+1}\bigr)\ge 0$. 
 
 If $\Xi{}(0)$ is positive definite, then it will remain so at least for some time, and the dissipation is initially strictly positive. It is not clear that the dissipation stays nonnegative for as long as $\Xi{}$ exists.
 \end{remark}
 
\subsection{The Zaremba-Jaumann and Oldroyd A fluids}\label{ZJ and co}

We finally consider complex fluid models based on two other objective derivatives, the Zaremba-Jaumann fluid, see for instance \cite{Eiter}-\cite{LionsMasmoudi}, and the Oldroyd A fluid, \cite{Hinch}-\cite{Oldroyd}-\cite{Renardy}, both models being considerably less prominent in the literature than  the Oldroyd B fluid.

 Expressed in terms of the polymer stress, they simply read
\begin{equation*}\label{ZB fluid}
\sigma_p+\lambda_1\overset{\square}{\sigma}_p=2\eta_p d,
\end{equation*}
for Zaremba-Jaumann and 
\begin{equation*}\label{Old A fluid}
\sigma_p+\lambda_1\overset{\vartriangle}{\sigma}_p=2\eta_p d,
\end{equation*}
for Oldroyd A,
together with a Newtonian solvent stress.

We can derive both models from our Lagrangian formulation  by adapting the flow rule while retaining \eqref{KVM ansatz}, \eqref{newtonian} and the incompressibility condition. It is to be expected that they are slightly less natural than the Oldroyd B model, due to Proposition \ref{et voila OB}.

We thus use the same ingredients as for the Oldroyd B model, except  for the flow rule, with
\begin{multline*}
\widehat K_{\textnormal{ZJ}}(F,H,\Xi{})=-\frac1{\lambda_1}\Xi{}+\frac{2\eta_p}{\mu\lambda_1}F^{-1}\Sym(HF^{-1})F^{-T}\\
-F^{-1}\Sym(HF^{-1})F\Xi{}-\Xi{}F^T\Sym(HF^{-1})F^{-T},
\end{multline*}
for Zaremba-Jaumann, and 
\begin{multline*}
\widehat K_{\textnormal{A}}(F,H,\Xi{})=-\frac1{\lambda_1}\Xi{}+\frac{2\eta_p}{\mu\lambda_1}F^{-1}\Sym(HF^{-1})F^{-T}\\
-2F^{-1}\Sym(HF^{-1})F\Xi{}-2\Xi{}F^T\Sym(HF^{-1})F^{-T}.
\end{multline*}
for Oldroyd A. Note that the resulting ordinary differential equations for $\Xi{}$ are still linear, but with variable coefficients. Therefore, there is no explicit Duhamel formula expressing their solutions, as opposed to the Oldroyd B case. This also explains why we resort below to numerical simulations in order to investigate the properties of these models with respect to the second law. 

It is a simple computation to check that

\begin{proposition}The Lagrangian models produced by the above choices are the Zaremba-Jaumann fluid and Oldroyd A fluid models respectively. 
\end{proposition}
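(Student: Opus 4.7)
The plan is to mimic the strategy used in the proof of Proposition \ref{jackpot!}, leveraging Proposition \ref{et voila OB} to express the relevant objective derivative of $\sigma_p$ in terms of $\partial \Xi/\partial t$, and then substituting the proposed flow rules.

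First, I would record two algebraic identities relating the Zaremba-Jaumann and Oldroyd A derivatives to the upper convected one. Using $h = w + d$ and $h^T = d - w$, a direct expansion of the definitions given in Section \ref{les derivees objectives} yields
\begin{equation*}
\overset{\square}{\sigma} = \overset{\triangledown}{\sigma} + d\sigma + \sigma d,
\qquad
\overset{\vartriangle}{\sigma} = \overset{\triangledown}{\sigma} + 2(d\sigma + \sigma d),
\end{equation*}
which hold for any symmetric $\sigma$ and any velocity gradient $h$. These are the natural bridges from Section \ref{section lagrangienne} to the present models.

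Next, since the constitutive choices \eqref{KVM ansatz} and \eqref{newtonian} are unchanged, the computation $\sigma_p = \mu F \Xi F^T$ of equation \eqref{polymer stress} carries over verbatim, and Proposition \ref{et voila OB} applied to the second Piolà-Kirchhoff stress $\Sigma_p = \mu \Xi$ still yields
\begin{equation*}
\overset{\triangledown}{\sigma}_p = \mu F\,\frac{\partial \Xi}{\partial t}\,F^T,
\end{equation*}
exactly as in \eqref{derivee Old B du stress p}. Combining this with the two identities above gives
\begin{equation*}
\overset{\square}{\sigma}_p = \mu F\,\frac{\partial \Xi}{\partial t}\,F^T + d\sigma_p + \sigma_p d,
\qquad
\overset{\vartriangle}{\sigma}_p = \mu F\,\frac{\partial \Xi}{\partial t}\,F^T + 2(d\sigma_p + \sigma_p d).
\end{equation*}

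The proof then reduces to substitution. For the Zaremba-Jaumann case, I would plug $\widehat K_{\textnormal{ZJ}}$ into $\mu F(\partial \Xi/\partial t)F^T$, observing that, by $\sigma_p = \mu F \Xi F^T$ and $d = \Sym(HF^{-1})$,
\begin{equation*}
\mu F\bigl(F^{-1}\Sym(HF^{-1})F\Xi\bigr)F^T = d\,\sigma_p,
\qquad
\mu F\bigl(\Xi F^T\Sym(HF^{-1})F^{-T}\bigr)F^T = \sigma_p\,d,
\end{equation*}
while the first two summands of $\widehat K_{\textnormal{ZJ}}$ reproduce, as in Proposition \ref{jackpot!}, the terms $-\sigma_p/\lambda_1 + 2\eta_p d/\lambda_1$. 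Rearranging, one obtains $\sigma_p + \lambda_1\overset{\square}{\sigma}_p = 2\eta_p d$. The Oldroyd A case is identical, with the factor $2$ in front of the last two terms of $\widehat K_{\textnormal{A}}$ matching the factor $2$ in front of $d\sigma_p + \sigma_p d$ in the expression for $\overset{\vartriangle}{\sigma}_p$. Adding the Newtonian solvent stress $\sigma_s = 2\eta_s d$ from \eqref{newtonian euler} in both cases concludes the proof.

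This is entirely routine; there is no real obstacle, the only thing to be a little careful about is the conjugation by $F$ and $F^T$ that turns the two last summands of each flow rule into the anticommutator $d\sigma_p + \sigma_p d$ with the correct numerical factor. The fact that both flow rules remain linear in $\Xi$ (with variable, $H$-dependent coefficients this time) is an immediate byproduct.
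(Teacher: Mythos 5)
Your proof is correct and follows exactly the approach the paper leaves implicit when it says "it is a simple computation to check": expressing $\overset{\square}{\sigma}_p$ and $\overset{\vartriangle}{\sigma}_p$ as $\overset{\triangledown}{\sigma}_p$ plus $d\sigma_p+\sigma_p d$ (resp. $2(d\sigma_p+\sigma_p d)$), invoking Proposition~\ref{et voila OB}, and observing that the last two terms of each flow rule become precisely these anticommutators after conjugation by $F,F^T$. Your identities also match the paper's remark that $\overset{\square}{\sigma}_p=\tfrac12\bigl(\overset{\triangledown}{\sigma}_p+\overset{\vartriangle}{\sigma}_p\bigr)$.
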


Note that the correction applied to Oldroyd B in order to obtain Oldroyd A is twice that applied to obtain Zaremba-Jaumann. Indeed, $\overset{\square}{\sigma}_p=\frac12\bigl(\overset{\triangledown}{\sigma}_p+\overset{\vartriangle}{\sigma}_p\bigr)$, see Section \ref{les derivees objectives}.
We can also compute their internal dissipations. For the Zaremba-Jaumann fluid, we obtain

\begin{proposition}The internal dissipation of the Zaremba-Jaumann model is given by 
\begin{equation}\label{dissipation ZJ}
  \widehat D_{\mathrm{int}}(F,H,\Xi{})=2\eta_s\|\Sym(HF^{-1})\|^2+\frac\mu2 \Xi{}:\Bigl(\frac{1}{\lambda_1}C+C'\Bigr)
\end{equation}
 in the Lagrangian formulation, using $C'$ as shorthand for $2F^T\Sym(HF^{-1})F$, and 
\begin{equation}\label{dissipation ZJ Euler}
  \widehat d_{\mathrm{int}}(h,\xi)=2\eta_s\|d\|^2+\frac1{2\lambda_1}\tr\xi+\xi:d
\end{equation}
in the Eulerian formulation, with $\xi=\sigma_p$ as before.
\end{proposition}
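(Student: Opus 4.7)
The plan is a direct computation starting from the general formula
$$
\widehat D_{\mathrm{int}}(F,H,\Xi{})=\widehat T_{\mathrm{Rd}}(F,H,\Xi{}):H
-\frac{\partial\widehat A_m}{\partial \Xi{}}(F,\Xi{}):\widehat K_{\textnormal{ZJ}}(F,H,\Xi{}),
$$
which is applicable since $\widehat A_m$ does not depend on $H$. The solvent contribution $\widehat T_{\mathrm{Rd}}:H=2\eta_s\|\Sym(HF^{-1})\|^2$ has already been computed in the proof of Proposition \ref{scoop!} and requires no new work.

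For the internal variable contribution, I use $\frac{\partial\widehat A_m}{\partial\Xi{}}=\frac\mu2C$ from \eqref{derivees KVM ansatz} and split $\widehat K_{\textnormal{ZJ}}$ into its four terms. The first two terms coincide with $\widehat K$ in the Oldroyd B flow rule \eqref{la bonne flow rule}, so by the proof of Proposition \ref{scoop!} they contribute exactly $\frac{\mu}{2\lambda_1}\Xi{}:C$, the $L(H)$ piece vanishing by incompressibility. The remaining work is to handle the two additional terms
$$
-\tfrac\mu2 C:\bigl(-F^{-1}\Sym(HF^{-1})F\Xi{}-\Xi{}F^T\Sym(HF^{-1})F^{-T}\bigr).
$$
Writing $d=\Sym(HF^{-1})$ and using cyclicity of the trace together with $C=F^TF$, the first of these is
$$
\tfrac\mu2\tr\bigl(F^TFF^{-1}dF\Xi{}\bigr)=\tfrac\mu2\tr\bigl(F^TdF\Xi{}\bigr),
$$
while the second becomes $\tfrac\mu2\tr\bigl(F\Xi{}F^TdF^{-T}F^TF\,F^{-1}\bigr)$; after cyclic rearrangement this also equals $\tfrac\mu2\tr\bigl(F^TdF\Xi{}\bigr)$. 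Their sum is $\mu\tr\bigl(F^TdF\Xi{}\bigr)=\tfrac\mu2\,\Xi{}:C'$ using the identity $C'=2F^T\Sym(HF^{-1})F$ and symmetry of $\Xi$. Assembling everything yields \eqref{dissipation ZJ}.

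For the Eulerian form \eqref{dissipation ZJ Euler}, I translate using $\xi=\sigma_p=\mu F\Xi{}F^T$ and the conventions of Section \ref{section lagrangienne}. As already observed in the remark following Proposition \ref{pareil en eulerien}, $\mu\Xi{}:C=\tr\xi$, giving $\frac{\mu}{2\lambda_1}\Xi{}:C=\frac{1}{2\lambda_1}\tr\xi$. For the new term,
$$
\tfrac\mu2\,\Xi{}:C'=\mu\tr(\Xi{}F^TdF)=\tr(\mu F\Xi{}F^T d)=\xi:d,
$$
and $\|\Sym(HF^{-1})\|^2=\|d\|^2$ at corresponding points, whence \eqref{dissipation ZJ Euler} follows.

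The only mildly delicate step is the cyclic rearrangement that identifies the two extra cross terms as two equal copies of $\tfrac\mu2\tr(F^TdF\Xi{})$; beyond that, the calculation is routine and closely parallels the Oldroyd B case.
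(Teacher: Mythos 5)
Your computation is correct and follows essentially the same route as the paper: plug $\widehat K_{\textnormal{ZJ}}$ into the general dissipation formula, reuse the Oldroyd B computation (solvent term and vanishing of the $L(H)$ piece by $\tr d=0$), and reduce the two extra terms by cyclicity of the trace to $\mu\tr\bigl(F^T\Sym(HF^{-1})F\,\Xi\bigr)=\frac{\mu}{2}\Xi:C'$, then translate with $\xi=\mu F\Xi F^T$. Nothing is missing; the paper merely condenses the same steps.
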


\begin{proof}
Let us just compute the part $ \widehat D_{\mathrm{int,f}}$ of the dissipation stemming from the flow rule. We still have $\frac{\partial \widehat A_m}{\partial \Xi{}}=\frac\mu2 C$. Therefore
\begin{align*}
\widehat D_{\mathrm{int,f}}(F,H,\Xi{})&=\frac\mu2 C:\Bigl(\frac1{\lambda_1}\Xi{}-\frac{2\eta_p}{\mu\lambda_1}F^{-1}\Sym(HF^{-1})F^{-T}\\&\qquad\qquad
+F^{-1}\Sym(HF^{-1})F\Xi{}+\Xi{}F^T\Sym(HF^{-1})F^{-T}\Bigr)\\
&=\frac\mu{2} \Bigl(C:\frac1{\lambda_1}\Xi{}
+2F^T\Sym(HF^{-1})F:\Xi{}\Bigr)
\end{align*}
since $\tr\bigl(\Sym(HF^{-1})\bigr)=\tr(d)=0$ as before. The translation in Eulerian terms is straightforward since $\xi=\sigma_p= \mu F\Xi{}F^T$ still.
\end{proof}

This is a case when the identification of the set $\mathcal{C}_0$ is again possible.
\begin{proposition}There holds $\mathcal{C}_0=\{0\}$.
\end{proposition}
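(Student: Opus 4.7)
The plan is to show both inclusions directly. The inclusion $\{0\} \subset \mathcal{C}_0$ is trivial from the formula \eqref{dissipation ZJ}, so the content is the reverse. Since $\widehat A_m$ does not depend on $H$, the second condition in \eqref{def de C0 bis} is automatic, so I only need to work out the consequences of $\widehat D_{\mathrm{int}}(F,H,\Xi_0)\ge 0$ for all $(F,H)\in T\SL(3)$.

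First, I would set $H=0$ to kill the viscous and the $C'$ terms, leaving $\frac{\mu}{2\lambda_1}\Xi_0:C\ge 0$ for every $C\in\Sym_3^+\cap\SL(3)$. Arguing exactly as in the proof of Proposition \ref{condition C}, this implies $\Xi_0\in\overline{\Sym_3^+}$. Next, I would fix $F\in\SL(3)$ and, for arbitrary $H\in T_F\SL(3)$, rescale $H\to sH$ so that $d$ becomes $sd$ with $d=\Sym(HF^{-1})$ trace-free and otherwise arbitrary (the map $H\mapsto \Sym(HF^{-1})$ from $T_F\SL(3)$ surjects onto $\Sl(3)\cap\Sym_3$, as one sees by taking $H=dF$). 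The dissipation becomes a quadratic in $s$:
\begin{equation*}
2\eta_s s^2\|d\|^2+\mu s\,(F\Xi_0F^T){:}d+\frac{\mu}{2\lambda_1}\Xi_0{:}C\ge 0,
\end{equation*}
whose nonnegativity on $\R$ forces the discriminant to be nonpositive. Taking the supremum over trace-free symmetric $d$ with $\|d\|=1$ yields
\begin{equation*}
\mu\,\bigl\|(F\Xi_0F^T)^{\mathrm{dev}}\bigr\|^2\le \frac{8\eta_s}{\lambda_1}\,\Xi_0{:}C,
\end{equation*}
where the superscript $\mathrm{dev}$ denotes the trace-free part, and I have used $\Xi_0{:}C=\tr(F\Xi_0F^T)$.

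To finish, I would diagonalize the (now known to be) positive semi-definite matrix $\Xi_0=Q\diag(\delta_1,\delta_2,\delta_3)Q^T$; since replacing $F$ by $FQ^T$ preserves $\SL(3)$, I may assume $\Xi_0=\diag(\delta_i)$ with $\delta_i\ge 0$. I would then specialize to $F=\diag(a,a^{-1/2},a^{-1/2})\in\SL(3)$ with $a>0$ large, so that $F\Xi_0F^T=\diag(a^2\delta_1,\delta_2/a,\delta_3/a)$. The leading behavior as $a\to+\infty$ is
\begin{equation*}
\bigl\|(F\Xi_0F^T)^{\mathrm{dev}}\bigr\|^2=\tfrac{2}{3}a^4\delta_1^2+O(a^2),\quad \Xi_0{:}C=a^2\delta_1+O(a^{-1}),
\end{equation*}
so the previous inequality becomes $\frac{2\mu}{3}a^4\delta_1^2\lesssim \frac{8\eta_s}{\lambda_1}a^2\delta_1$, which fails for $a$ large unless $\delta_1=0$. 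By permuting the role of the diagonal entries with analogous choices of $F$, I obtain $\delta_2=\delta_3=0$ as well, hence $\Xi_0=0$.

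The whole argument is elementary once the discriminant step is in hand, so there is no serious obstacle. The only mildly delicate point is checking that $d=\Sym(HF^{-1})$ truly ranges over all of $\Sl(3)\cap\Sym_3$ as $H$ varies in $T_F\SL(3)$, which is what justifies optimizing in direction $d$ freely before optimizing over $F$; this is an easy verification using $H=dF$. The conclusion $\mathcal{C}_0=\{0\}$ then follows, signalling that the Zaremba-Jaumann fluid, in this Lagrangian formulation with the neo-Hookean free energy, essentially fails the conditional second law except in the trivial configuration.
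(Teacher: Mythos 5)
Your argument is correct and follows essentially the same route as the paper: fix $F$, vary the magnitude of $d=\Sym(HF^{-1})$ to obtain a quadratic in $s$, force the discriminant to be nonpositive, optimize over the direction of $d$ to land on the deviatoric part of $F\Xi_0F^T$ (the paper directly picks $d$ proportional to that deviatoric part, which is the same optimization), and then scale $F$ diagonally to push the inequality to a contradiction unless every eigenvalue of $\Xi_0$ vanishes. Two harmless slips: the constant should be $\frac{4\eta_s}{\lambda_1}$ rather than $\frac{8\eta_s}{\lambda_1}$ (from $b^2\le 4ac$ with $a=2\eta_s\|d\|^2$, $b=\mu(F\Xi_0F^T)\!:\!d$, $c=\frac{\mu}{2\lambda_1}\Xi_0\!:\!C$), and the preliminary step establishing $\Xi_0\in\overline{\Sym_3^+}$ is not actually needed, since the final scaling already rules out $\delta_1\ne 0$ of either sign; neither affects the conclusion.
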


\begin{proof}First of all, by \eqref{def de C0 bis}, \eqref{KVM ansatz}
 and \eqref{dissipation ZJ}, it is clear that $0\in\mathcal{C}_0$.

Conversely, let $\Xi_0\in\mathcal{C}_0$.
For $F\in \SL(3)$ and $k\in \R$, we take $H\in T_F\SL(3)$ such that $\Sym(HF^{-1})=k\mu D$ with
\begin{equation}\label{choix de D}
D=\Bigl(F\Xi_0 F^T-\frac13\tr(F\Xi_0 F^T)I\Bigr).
\end{equation}
By \eqref{def de C0 bis}, \eqref{KVM ansatz}
 and \eqref{dissipation ZJ} again, this yields
$$
2\eta_s\|D\|^2k^2+\|D\|^2k+\frac1{2\mu\lambda_1}\tr(F\Xi_0 F^T)\ge 0,
$$
for all $k\in\R$. The left-hand side is a second degree polynomial in $k$, hence its discriminant must be nonpositive, \emph{i.e.},
$$
\|D\|^4-\frac{4\eta_s}{\mu\lambda_1}\|D\|^2\tr(F\Xi_0 F^T)\le 0,\text{ or }
\|D\|^2\le \beta\tr(F\Xi_0 F^T),
$$
where $\beta=\frac{4\eta_s}{\mu\lambda_1}$. Now, because of \eqref{choix de D}, it follows that 
$$\|D\|^2=\|F\Xi_0 F^T\|^2-\frac13\bigl(\tr(F\Xi_0 F^T)\bigr)^2.$$
We have thus so far obtained the necessary condition for $\Xi_0$ to belong to $\mathcal{C}_0$:
\begin{equation}\label{une condition pour ZJ}
\|F\Xi_0 F^T\|^2-\frac13\bigl(\tr(F\Xi_0 F^T)\bigr)^2\le \beta\tr(F\Xi_0 F^T),
\end{equation}
for all $F\in \SL(3)$. 

Now let $Q\in\SO(3)$ be such that $Q\Xi_0Q^T=\Delta$ with $\Delta=\diag(\delta_i)$, and $U=\diag(u^{1/2},u^{-1/2},1)$ with $u>0$. We take $F=UQ$. With this choice, it follows readily that $F\Xi_0 F^T=\diag(u\delta_1,\delta_2/u,\delta_3)$ and inequality \eqref{une condition pour ZJ} becomes
$$
\delta_1^2+u^{-4}\delta_2^2+u^{-2}\delta_3^3-\frac13\bigl(\delta_1+u^{-2}\delta_2+u^{-1}\delta_3\bigr)^2
\le \beta (u^{-1}\delta_1+u^{-3}\delta_2+u^{-2}\delta_3).$$
Letting $u\to+\infty$, we obtain that $\frac23\delta_1^2\le 0$, so that $\delta_1=0$. Of course then $\delta_2=\delta_3=0$ as well and $\Xi_0=0$.
\end{proof}

It follows from the general theory that $\mathcal{C}\subset\mathcal{C}_0=\{0\}$. However, the same numerical experiment as in \cite{HLDAR1} shows that $\Xi_0=0$ can lead to strictly negative dissipation, see Figure \ref{fig} below. We therefore can say that $\mathcal{C}=\emptyset$ or that the Zaremba-Jaumann fluid with the neo-Hookean inspired free energy \eqref{KVM ansatz} does not satisfy the second law even conditionally.

Also in \cite{HLDAR1}, we showed that the alternate choice $\widehat a_m(\xi)=\frac{\lambda_1}{4\eta_p}\|\xi\|^2$ as free energy, makes the Zaremba-Jaumann fluid  satisfy the second law unconditionally, using the corresponding dissipation. Unfortunately, as already noticed in \cite{HLDAR1}, this choice is not physical due to the necessity of the factor $\frac{\lambda_1}{4\eta_p}$ in an elastic energy term, a factor which is built on constants from the flow rule pertaining to viscous effects. A translation in Lagangian terms would therefore satisfy the second law unconditionally, but suffer from the same defect.

The situation is the same for the Oldroyd A fluid, based on the remark above on the corrections applied to Oldroyd B. 

\begin{proposition}The internal dissipation of the Oldroyd A model is given by 
\begin{equation}\label{dissipation OA}
  \widehat D_{\mathrm{int}}(F,H,\Xi{})=2\eta_s\|\Sym(HF^{-1})\|^2+\frac\mu2 \Xi{}:\Bigl(\frac{1}{\lambda_1}C+2C'\Bigr)
\end{equation}
in the Lagrangian formulation and 
\begin{equation}\label{dissipation OA Euler}
  \widehat d_{\mathrm{int}}(h,\xi)=2\eta_s\|d\|^2+\frac1{2\lambda_1}\tr\xi+2\xi:d
\end{equation}
in the Eulerian formulation, still with $\xi=\sigma_p$.

In this case, there also holds $\mathcal{C}_0=\{0\}$.
\end{proposition}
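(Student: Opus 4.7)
The plan is to mimic the Zaremba-Jaumann proof verbatim, the only change being that each of the two cross correction terms in $\widehat K_{\textnormal{A}}$ carries a coefficient of $2$ rather than $1$, which propagates through the computation as a factor of $2$ on the $\xi\!:\!d$ term in the dissipation. First I would verify the dissipation formula. Since $\partial\widehat A_m/\partial H=0$, Proposition \ref{dissipation conditionnelle} reduces to $\widehat D_{\mathrm{int}}=\widehat T_{\mathrm{Rd}}:H-\frac\mu 2 C\cdot\widehat K_{\textnormal{A}}$, and the Newtonian piece $2\eta_s\|\Sym(HF^{-1})\|^2$ is unchanged from the Oldroyd B computation. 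In $-\frac\mu 2 C:\widehat K_{\textnormal{A}}$, the $-\frac1{\lambda_1}\Xi$ term contributes $\frac{\mu}{2\lambda_1}\Xi:C$, the $\frac{2\eta_p}{\mu\lambda_1}F^{-1}\Sym(HF^{-1})F^{-T}$ term vanishes exactly as in the proof of Proposition \ref{scoop!} (by the vanishing of $\tr d$), and each of the two remaining correction terms contributes $\mu\,\Xi:(F^T\Sym(HF^{-1})F)=\tfrac\mu 2\Xi:C'$ after cyclic permutation inside the trace. Summing yields \eqref{dissipation OA}. Passage to Eulerian form is identical to the Zaremba-Jaumann case, using $\tr\xi=\mu\,\Xi:C$ and $\mu\,\Xi:C'=2\xi:d$, giving \eqref{dissipation OA Euler}.

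For the identification $\mathcal{C}_0=\{0\}$, the argument is a line-by-line repeat of the Zaremba-Jaumann proof with the appropriate change of constants. Given $\Xi_0\in\mathcal{C}_0$ and $F\in\SL(3)$, choose $H\in T_F\SL(3)$ with $\Sym(HF^{-1})=k\mu D$, where $D=F\Xi_0 F^T-\tfrac13\tr(F\Xi_0 F^T)I$ is the trace-free part of $F\Xi_0 F^T$; such an $H$ exists because $D$ is symmetric and trace-free. Using $F\Xi_0 F^T:D=\|D\|^2$ and substituting into the dissipation yields a quadratic in $k$ whose quadratic and linear coefficients are both proportional to $\|D\|^2$, with the linear coefficient doubled relative to the Zaremba-Jaumann case. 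Requiring nonnegativity for all $k\in\R$ forces either $\|D\|^2=0$ or the discriminant inequality
$$\|F\Xi_0 F^T\|^2-\tfrac13\bigl(\tr(F\Xi_0 F^T)\bigr)^2\le\beta\,\tr(F\Xi_0 F^T)$$
for all $F\in\SL(3)$, with an explicit constant $\beta>0$ whose precise value is immaterial.

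I would then conclude exactly as in the Zaremba-Jaumann proof: diagonalize $\Xi_0=Q^T\Delta Q$ with $\Delta=\diag(\delta_j)$, take $F=UQ$ with $U=\diag(u^{1/2},u^{-1/2},1)$, and let $u\to+\infty$. The left-hand side stays bounded below by $\tfrac23\delta_1^2$ plus terms tending to $0$, whereas the right-hand side tends to $0$, forcing $\delta_1=0$; permuting the entries of $U$ forces $\delta_2=\delta_3=0$ as well, and therefore $\Xi_0=0$. Conversely, $0\in\mathcal{C}_0$ is immediate from the dissipation formula together with the fact that the Newtonian part is manifestly nonnegative. There is no real obstacle beyond bookkeeping: the replacement $1\rightsquigarrow 2$ on the cross term only rescales $\beta$, preserving the structure of the quadratic and the dominant behavior in $u$, so the final limit argument is unaffected.
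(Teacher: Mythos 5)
Your proposal is correct and follows exactly the route the paper intends: the paper does not write out a proof for the Oldroyd A proposition but remarks that the Oldroyd A correction is twice the Zaremba-Jaumann correction and that ``the situation is the same,'' and you carry out precisely that bookkeeping — both for the dissipation formula (the two cross terms each yield $\tfrac\mu2\Xi:C'$, doubling the ZJ contribution) and for the identification $\mathcal{C}_0=\{0\}$ (the linear coefficient of the quadratic in $k$ is doubled, so only the constant $\beta$ changes, and the $u\to+\infty$ limit argument goes through verbatim). One small imprecision worth noting: you write that nonnegativity for all $k$ ``forces either $\|D\|^2=0$ or the discriminant inequality''; in fact the displayed inequality holds unconditionally once one observes that $\|D\|^2=0$ together with the constant term being nonnegative makes it trivially true — but this does not affect the conclusion, which matches the paper's.
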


The same negative considerations concerning the second law hold for the Oldroyd A model. This does not rule out other physically motivated choices of free energies for each of the two models, for which the second law could be conditionally satisfied.

In Figure \ref{fig} below, we show the results of a numerical simulation for both Oldroyd models and for the Zaremba-Jaumann model, with the same data, already described in \cite{HLDAR1}. Namely, an Eulerian computation with $h(x,t)=\cos(\omega t)m$, where $m$ is a randomly chosen $3\times 3$ traceless matrix with coefficients between $-1$ and $1$, and $\omega=0.75$. The material constants are $\lambda_1=10$, $\eta_s=.1$ and $\eta_p=1.9$. The initial value for $\sigma_p=\xi$ is $\xi(0)=0$, in order to accommodate all three $\mathcal{C}_0$ sets. We plot the internal dissipation $d_{\mathrm{int}}(t)$ with a solid line, $\frac1{2\lambda_1}\tr\xi(t)$ with a dashed line (Oldroyd A and B cases only\footnote{It is easy to see that $\tr\xi(t)=0$ in the Zaremba-Jaumann case}) and $\xi(t):d(t)$ with a dotted line (Zaremba-Jaumann and Oldroyd A cases only),  for $t$ from $0$ to $40$, with different vertical scales for each model. We see that the Oldroyd B dissipation \eqref{dissipation eulerienne} remains nonnegative as expected, whereas both Zaremba-Jaumann dissipation \eqref{dissipation ZJ}-\eqref{dissipation ZJ Euler} and Oldroyd A dissipation \eqref{dissipation OA}-\eqref{dissipation OA Euler} take strictly negative values in finite time, thus violating the conditional second law. 

\begin{figure}[ht]
\begin{center}
  \includegraphics[scale=.2242]{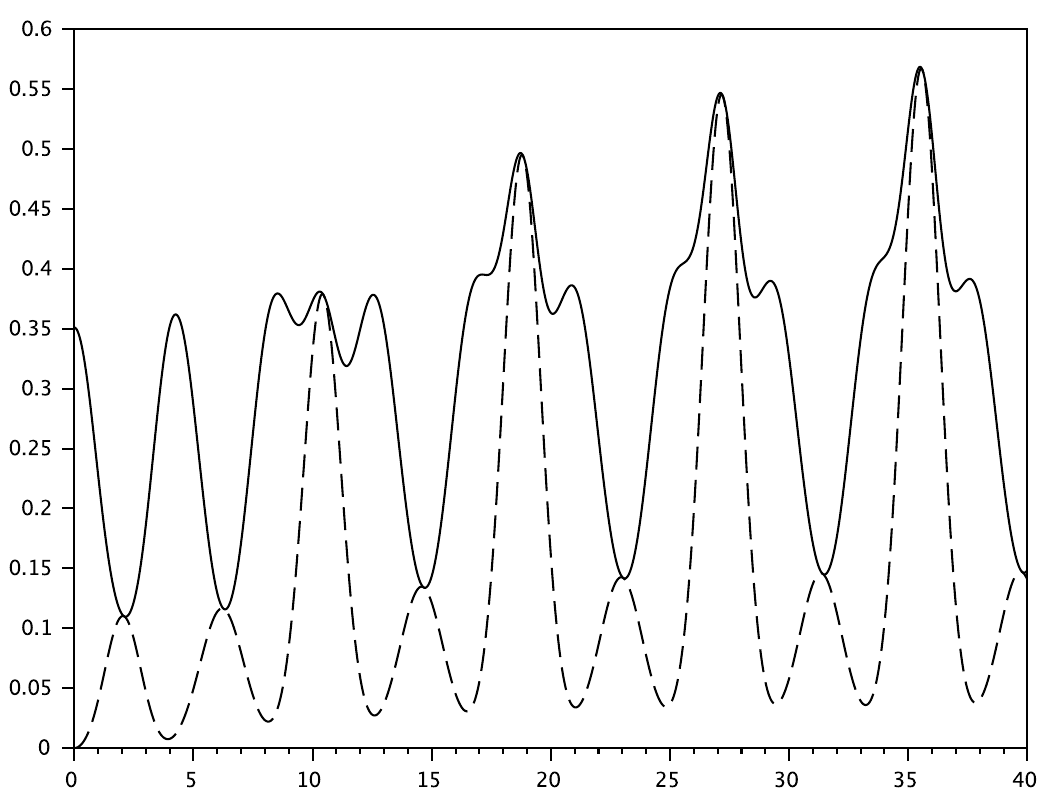}
  \includegraphics[scale=.2242]{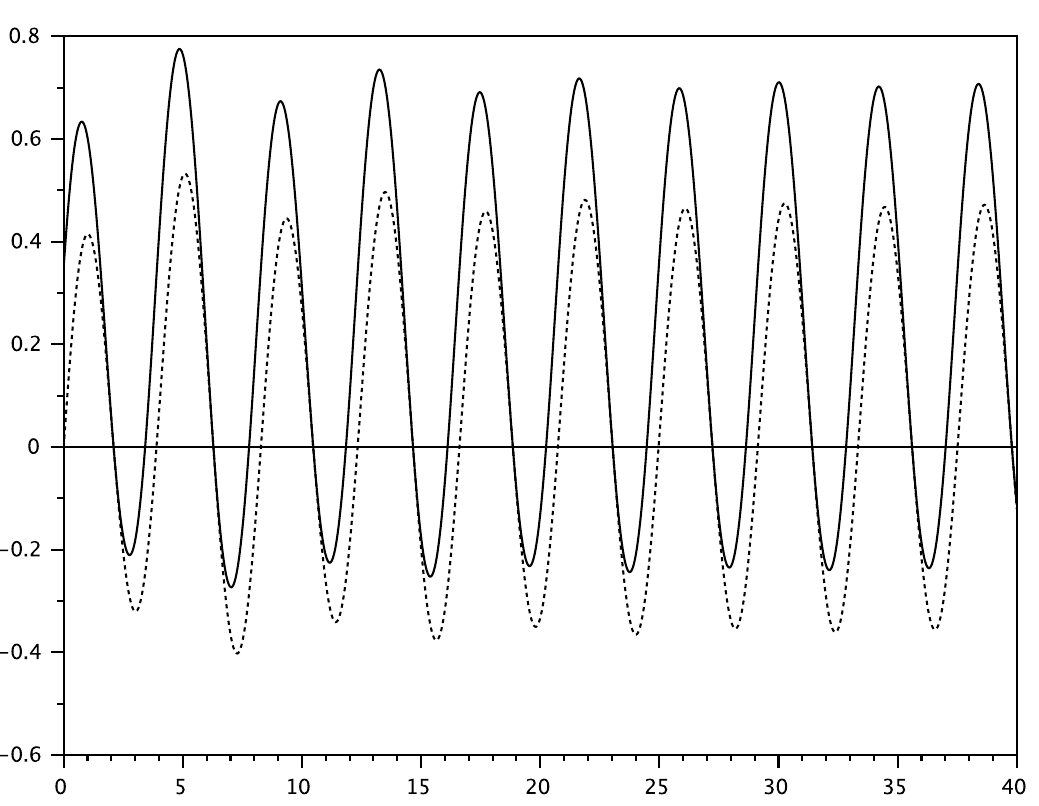}
   \includegraphics[scale=.2242]{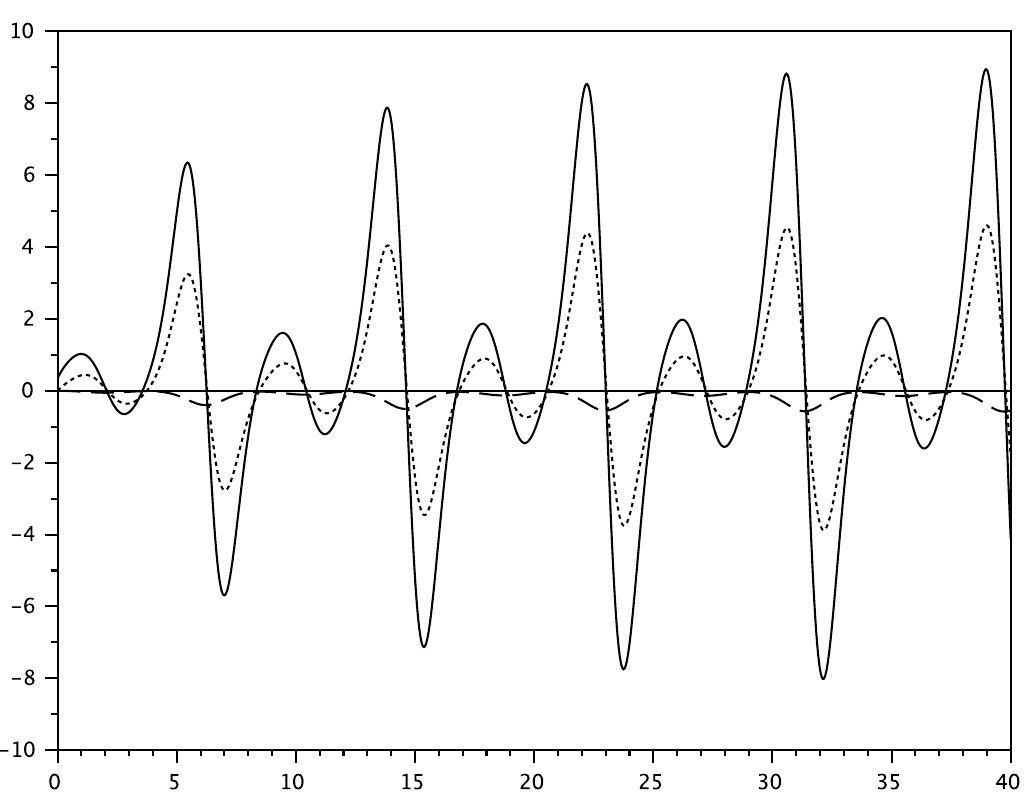}
\caption{Left: Oldroyd B, center: Zaremba-Jaumann, right: Oldroyd A.\label{fig}}
\end{center}
\end{figure}

In Figure \ref{fig2}, we plot the smallest eigenvalue of $\sigma_p$ vs.\ time in the above Oldroyd B case. We see that this smallest eigenvalue becomes strictly negative, hence neither $\sigma_p$ nor $\Xi{}$ remain positive semi-definite for all times in this particular example either, \emph{cf.}\ Proposition \ref{damned}. Note that here $-\frac{\eta_p}{\lambda_1}=-.19$, and the smallest eigenvalue of $\sigma_p$ stays above this value as expected from Remark \ref{borne inf}.

\begin{figure}[ht]
\begin{center}
  \includegraphics[scale=.3]{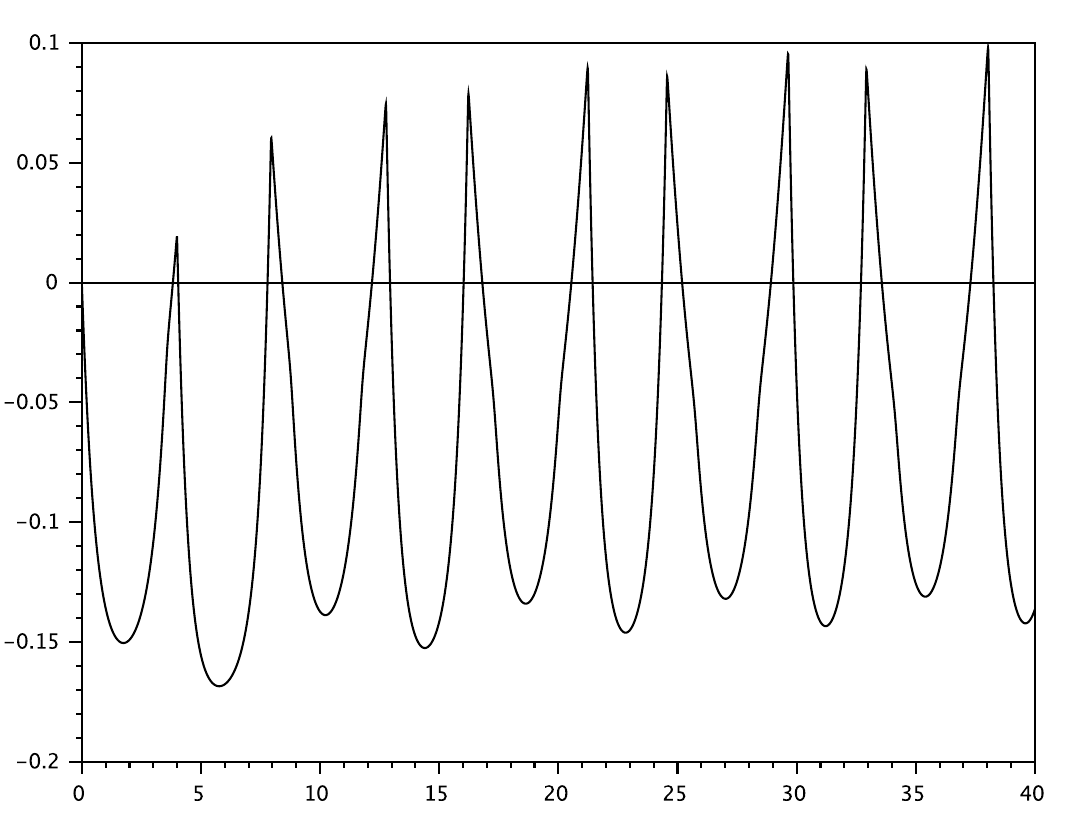}
\caption{Smallest eigenvalue of $\sigma_p$, Oldroyd B case.\label{fig2}}
\end{center}
\end{figure}

\end{document}